\documentclass{article}

\usepackage{amsmath}
\usepackage{amsfonts}
\usepackage{amssymb}
\usepackage{amsthm}
\usepackage{diagbox}
\usepackage{graphicx}
\usepackage[hidelinks]{hyperref}
\usepackage{enumitem}
\usepackage{eucal}
\usepackage{algorithm}
\usepackage{algpseudocode}
\usepackage{bm}
\usepackage{bbm}
\usepackage{multirow}
\usepackage{mathtools}

\DeclareMathOperator*{\argmax}{argmax}

\usepackage{etoolbox}
\usepackage{tikz}
\usetikzlibrary{arrows}

\usepackage{lineno}

\newtheorem*{prop*}{Proposition}

\newcommand*\linenomathpatch[1]{%
  \cspreto{#1}{\linenomath}%
  \cspreto{#1*}{\linenomath}%
  \csappto{end#1}{\endlinenomath}%
  \csappto{end#1*}{\endlinenomath}%
}

\linenomathpatch{equation}
\linenomathpatch{gather}
\linenomathpatch{multline}
\linenomathpatch{align}
\linenomathpatch{alignat}
\linenomathpatch{flalign}

\newcommand{\bY}{\mathbf{Y}}

\newcommand{\pcsp}[2]{{#1}\hspace{-0.09em}\rightarrow\hspace{-0.09em}{#2}}

\newcommand{\sdag}{\mathcal{D}}

\newcommand{\sscl}[2]{(#1,\underline{#2})}

\newcommand{\subsplit}[2]{\{\{#1\},\{#2\}\}}

\newcommand{\iqtree}{\texttt{iqtree}}
\newcommand{\bito}{\texttt{bito}}

\newcommand{\raxml}{\texttt{RAxML}}
\newcommand{\mrbayes}{\texttt{MrBayes}}
\newcommand{\slurm}{\texttt{SLURM}}
\newcommand{\citet}[1]{\cite{#1}}
\newcommand{\citep}[1]{\cite{#1}}

\newlength{\arrowwidth}

\newcommand{\beginsupplement}{%
        \setcounter{table}{0}
        \renewcommand{\thetable}{S\arabic{table}}%
        \setcounter{figure}{0}
        \renewcommand{\thefigure}{S\arabic{figure}}%
     }

\hyphenation{Ge-nome Ge-nomes hyper-mut-ation through-put}

\title{Finding high posterior density phylogenies by systematically extending a directed acyclic graph}

\author{
Chris Jennings-Shaffer$^{1}$,
David H Rich$^{1}$, 
Matthew Macaulay$^{2}$, \\
Michael D Karcher$^{3}$, 
Tanvi Ganapathy$^{1}$,\\
Shosuke Kiami$^{1}$,
Anna Kooperberg$^{1}$,  
Cheng Zhang$^{4}$, 
Marc A Suchard$^{5,6,7}$,\\
Frederick A Matsen IV$^{1,8,9,*}$
}
\date{%
\small
$^{1}$Public Health Sciences Division, Fred Hutchinson Cancer Research Center, Seattle, Washington, USA;\\
$^{2}$University of Technology Sydney, Australian Institute for Microbiology \& Infection, Sydney, Australia;\\
$^{3}$Department of Math \& Computer Science, Muhlenberg College, Allentown, Pennsylvania, USA;\\
$^{4}$School of Mathematical Sciences, Peking University, Beijing, China;\\
$^{5}$Department of Human Genetics, University of California, Los Angeles, USA;\\
$^{6}$Department of Computational Medicine, University of California, Los Angeles, USA;\\
$^{7}$Department of Biostatistics, University of California, Los Angeles, USA;\\
$^{8}$Department of Genome Sciences, University of Washington, Seattle, USA;\\
$^{9}$Howard Hughes Medical Institute, Fred Hutchinson Cancer Research Center, Seattle, Washington, USA;\\
$^{*}$Corresponding author. Computational Biology Program, Fred Hutchinson Cancer Research Center, 1100 Fairview Ave. N., Mail stop: S2-140, Seattle, WA 98109-1024 \\
Email addresses: \texttt{cjennin2@fredhutch.org, david.rich27@gmail.com, matt.macaulay101@gmail.com, michaelkarcher@muhlenberg.edu, tganapat@caltech.edu, kiami.sho@gmail.com, alkooperberg@gmail.com, chengzhang@math.pku.edu.cn, msuchard@ucla.edu, matsen@fredhutch.org}
}

\begin{document}
\allowdisplaybreaks

\maketitle

\clearpage

\begin{abstract}
Bayesian phylogenetics typically estimates a posterior distribution, or aspects thereof, using Markov chain Monte Carlo methods.
These methods integrate over tree space by applying local rearrangements to move a tree through its space as a random walk.
Previous work explored the possibility of replacing this random walk with a systematic search, but was quickly overwhelmed by the large number of probable trees in the posterior distribution.
In this paper we develop methods to sidestep this problem using a recently introduced structure called the subsplit directed acyclic graph (sDAG).
This structure can represent many trees at once, and local rearrangements of trees translate to methods of enlarging the sDAG.
Here we propose two methods of introducing, ranking, and selecting local rearrangements on sDAGs to produce a collection of trees with high posterior density.
One of these methods successfully recovers the set of high posterior density trees across a range of data sets.
However, we find that a simpler strategy of aggregating trees into an sDAG in fact is computationally faster and returns a higher fraction of probable trees.
\end{abstract}

\section*{Introduction}

Despite decades of work, Bayesian phylogenetics remains a computationally challenging problem.
Existing methods are based on the Markov chain Monte Carlo (MCMC) algorithm.
These methods begin with a tree, which may be random or generated by another method (e.g., parsimony), and propose random modifications to the tree. 
A random modification is accepted with probability proportional to the Metropolis-Hastings ratio of the new tree and current tree, which biases the chain towards trees with higher likelihood and prior probability.
We use the term \emph{topology} for the graph theoretic tree structure and \emph{tree} for a topology with branch lengths.
Because the high-confidence region of topologies is a tiny subspace of a space of super-exponential size, most of these random modifications will result in a significantly worse tree, and thus substantial modifications are overwhelmingly discarded.
This leads to low acceptance rates, which may reduce efficiency. 
Thus, although MCMC is a robust and flexible algorithm, it is inherently limited in its ability to efficiently infer phylogenetic posterior distributions.

Maximum likelihood methods take a different approach, and primarily work to find the maximum likelihood tree systematically rather than randomly.
These methods are typically iterative and apply local rearrangements at each step to improve the likelihood of the current highest likelihood tree.
These methods are substantially faster than MCMC methods, but do not attempt to characterize the entire credible set of possible trees or topologies.

Is it possible to combine these two approaches, in which one systematically infers an approximate Bayesian posterior distribution on trees?
Although the phylogenetic likelihood can only be evaluated on a tree with branch lengths, one can define the likelihood of a topology to be the likelihood of the tree given by the topology along with optimal branch lengths.
Previous work \cite{phylotopo} performed a systematic and parallel exploration of tree space by local rearrangements of the visited topologies and collecting only the resulting topologies above some likelihood threshold.
Normalizing the likelihoods of all visited topologies gives an approximation of the Bayesian posterior distribution. 
Although this formed an interesting proof of concept, it was ultimately defeated by there being too many high quality trees to do likelihood-based branch length optimization on each one.

More recent work \cite{sbn, vbpi, vbpi2, GP, hdag, Berling2024} has developed computational structures that are capable of storing and manipulating many trees or topologies at once.
We use terminology from \cite{GP} and call this structure a \emph{subsplit Directed Acyclic Graph} or \emph{sDAG} for short.
Different bifurcating tree-graphs in the sDAG correspond to different tree topologies (Figure~\ref{fig:treecomparison}).

In this paper, we develop systematic search strategies with sDAGs instead of topologies, with the goal of finding the smallest sDAG that contains a credible set of topologies.
Because sDAGs can represent many topologies at once, we avoid the problem of having too many topologies to consider individually.
Our approach in this work is to extend the nearest neighbor interchange (NNI) operation on topologies to an operation on sDAGs.
As detailed in the Methods section, we develop NNIs as an operation to enlarge an sDAG rather than a move from topology to topology. 

This approach requires a means of deciding whether an NNI is worth applying to the sDAG\@. 
One cannot directly apply classical phylogenetic criteria that judge a single topology at a time, because a single NNI operation can add many topologies to the sDAG at once (see Methods section).
We apply two approaches.
Both approaches associate branch lengths with each edge of the sDAG, which means that there is a one-to-one correspondence between topologies in the sDAG and trees in the sDAG\@.

The first approach, \emph{top pruning}, implements the idea that one would like to apply NNIs that generate at least one good topology; this corresponds to the idea of collecting a credible set of topologies and merging them into a sDAG\@.
In slightly more detail, top pruning maintains \emph{choice maps} that can be used recursively to get a ``best'' tree containing the central edge of any given NNI\@, and branch lengths of new additions to the sDAG are optimized to maximize the likelihood of the ``best'' tree containing the central edge of that NNI\@.
Thus, the likelihood of NNIs in the top pruning case is the classical Felsenstein phylogenetic likelihood of that ``best'' tree associated to the NNI.
(Here ``best'' is put in quotes because the ``best'' tree is an approximation to the maximum likelihood tree.)

The second, \emph{generalized pruning}, or \emph{GP}, implements the idea that one would like to apply NNIs that generate many probable topologies, but with a composite-like approximation to the marginal likelihood.
One can use this marginal likelihood to optimize branch lengths for newly added sDAG edges, as well as to decide if an NNI is worth applying to the sDAG under the GP criterion.
Specifically, the likelihood for an NNI is this GP marginal likelihood with optimized branch lengths.
All of this is made possible by a recently-developed algorithm to calculate the marginal composite likelihood across topologies~\cite{GP} for which computation time scales linearly in the number of edges of the sDAG\@.

When applied to benchmark data sets, we find that top pruning performs significantly better than generalized pruning in terms of discovering a credible set of topologies.
However, neither method delivers a major advance in terms of finding a small subsplit DAG containing a credible set when compared to aggregating an sDAG from a short run of \mrbayes~\cite{Ronquist2012-hi}.
Although this aggregation approach was taken to generate sDAGs in past work (e.g., \cite{sbn, GP, vbpi}), here we show, using a variety of data sets, that this aggregation strategy gives good representation of the posterior distribution without being over-diffuse.

\section*{Methods}

We begin by introducing the sDAG\@. 
Although it was described briefly as part of previous work~\cite{GP}, which assumed that such a structure was given, here our goal is to infer such a structure, so we will spend more time on developing and motivating the idea.
Other recent independent work~\cite{Berling2024} has developed a related but different structure.

\subsection*{Introduction to the subsplit DAG}

\begin{figure}
	\centering
	\includegraphics[width=\textwidth]{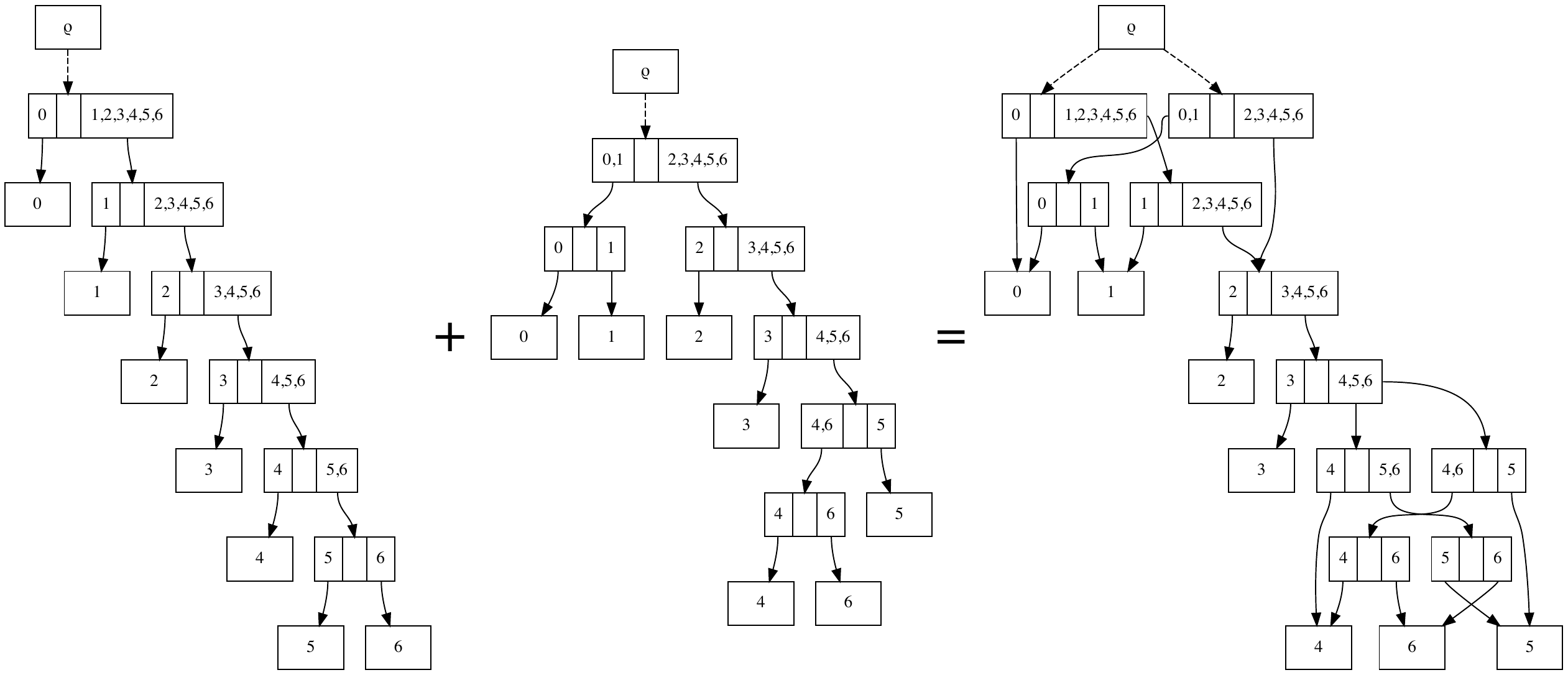}
	\caption{\
		The combination of two topologies $\tau_1$ and $\tau_2$ into a single sDAG\@.
		The sDAG (right) contains the union of the nodes and edges of the individual sDAGs from each of the topologies.
		It also contains additional topologies, such as the topology containing both $\subsplit{0}{1, 2, 3, 4, 5, 6}$ and $\subsplit{4, 6}{5}$, that are not present in the original set of two topologies.
	}
	\label{fig:treecomparison}
\end{figure}

We first describe how the subsplit DAG is a generalization of a single, rooted, bifurcating topology, and how the general case can be considered the union of a collection of topologies.
Take, for example, a single caterpillar topology on the taxon set $\{0, \ldots , 6\}$ (Figure~\ref{fig:treecomparison} left).

In this representation, we label each internal node with the two taxon sets that are leafward of each of the two edges coming from that internal node.
So, for example, the second node below the root $\rho$ has the taxon set $\{1\}$ on one side and the taxon set $\{2, 3, 4, 5, 6\}$ on the other, so the internal node is labeled with $\subsplit{1}{2, 3, 4, 5, 6}$.

We call such a bipartition of a subset of the taxon set a \emph{subsplit} (following~\citet{sbn, vbpi, vbpi2, GP}; subsplits were called \emph{partial splits} in~\citet{Redelings2012-ch}).
Each component of the subsplit is a \emph{clade} of the subsplit, and each of these clade components is referred to as a \emph{subsplit-clade}.
One can represent any rooted phylogenetic topology as an sDAG\@: each node is labeled with a subsplit describing the bipartition of the taxa leafward of that node, and each edge is directed in a leafward direction from the root.
More generally, an sDAG is a directed acyclic graph with subsplits as nodes and edges connecting parent subsplits to child subsplits partitioning individual clades in the parent subsplit.
To distinguish which subsplit-clade is partitioned along an edge, we write $\sscl{t}{X}\to s$, where $X$ is a subsplit-clade of the subsplit $t$, $s$ is a child subsplit of $t$, and $\bigcup(s)=X$.
We underline the subsplit-clade $X$ so that there is no ambiguity in which variables are subsplits and which are subsplit-clades.
For convenience, we assume there is an order on the taxa, which extends to an order on clades. 
We call the lesser subsplit-clade the left subsplit clade and the greater subsplit-clade the right subsplit-clade. 

We extend some common terminology for clades to sDAG edges. 
Suppose $t_1$, $t_2$, $s_1$, $s_2$ are subsplits, $e_1$ is the edge from $t_1$ to $s_1$, and $e_2$ is the edge from $t_2$ to $s_2$.
When $s_1=t_2$, we say $e_1$ is a parent edge of $e_2$ and say $e_2$ is a child edge of $e_1$; we may refine this by saying left child edge or right child edge, depending on which subsplit clade of $s_1$ is partitioned by $s_2$.
When $t_1=t_2$, but $s_1$ and $s_2$ partition distinct subsplit clades of $t_1$, we say $e_1$ and $e_2$ are sibling edges.

To build an sDAG encoding multiple topologies, we take the sDAG for each topology, then take the union of the nodes and edges in the two individual sDAG representations (Figure~\ref{fig:treecomparison}).
Thus an sDAG may contain a collection of topologies: any graph-theoretic-tree-structured subset of the nodes and edges in an sDAG that contains all of the leaves represents a tree topology\@.
Each subsplit-clade of outdegree greater than one requires a choice between the descending arrows.
For example, consider the edges leaving the subsplit-clade $\{4,5,6\}$ in Figure~\ref{fig:treecomparison}.
If we pick the edge leading to $\subsplit{4}{5,6}$, we will have a tree with $\{4\}$ branching off first, and if we pick the edge leading to $\subsplit{4,6}{5}$, we will have a tree with $\{5\}$ branching off first.

Note that if we build the sDAG from a collection of topologies, the sDAG may contain additional topologies beyond those used to build the sDAG (Figure~\ref{fig:treecomparison}).
In many respects this is a feature, not a bug: it allows us to expand the support of the sDAG combinatorially beyond the set of topologies used to build it.
On the other hand this can add topologies outside the credible set.
The balance between the advantage of additional topologies and disadvantage of ``false positive'' topologies will be a key consideration in our systematic search strategies.
However, as described next, we have finer control of the topology distributions than if we were to use the conditional clade distribution of~\citet{Hhna2012-pm} and~\citet{Larget2013-uo}.

\subsubsection*{Phylogenetic tree distributions}

A key application of the sDAG is to represent a probability distribution on phylogenetic topologies and trees.
If we assign such a probability distribution to the edges originating in each of the subsplit-clades, then we obtain a probability distribution on phylogenetic tree topologies.
For example, in Figure~\ref{fig:treecomparison} we can assign probabilities to the two options for the root split and to the two options for resolving the subsplit-clade $\{4,5,6\}$: $\subsplit{4}{5,6}$ and $\subsplit{4,6}{5}$.
Suppose the probability of $\subsplit{0,1}{2,3,4,5,6}$ is $0.4$ and the probability of $\subsplit{4}{5,6}$ is $0.3$, then the probability of the topology $((0,1),(2,(3,(4,(5,6)))))$ containing both of these subsplits is $0.4 \times 0.3 = 0.12$.
It is easy to see that assigning probability distributions to each set of edges leaving each subsplit-clade in the sDAG yields a (normalized) probability distribution on the topologies represented in the sDAG.

\begin{figure}
	\centering
	\includegraphics[width=0.4\textwidth]{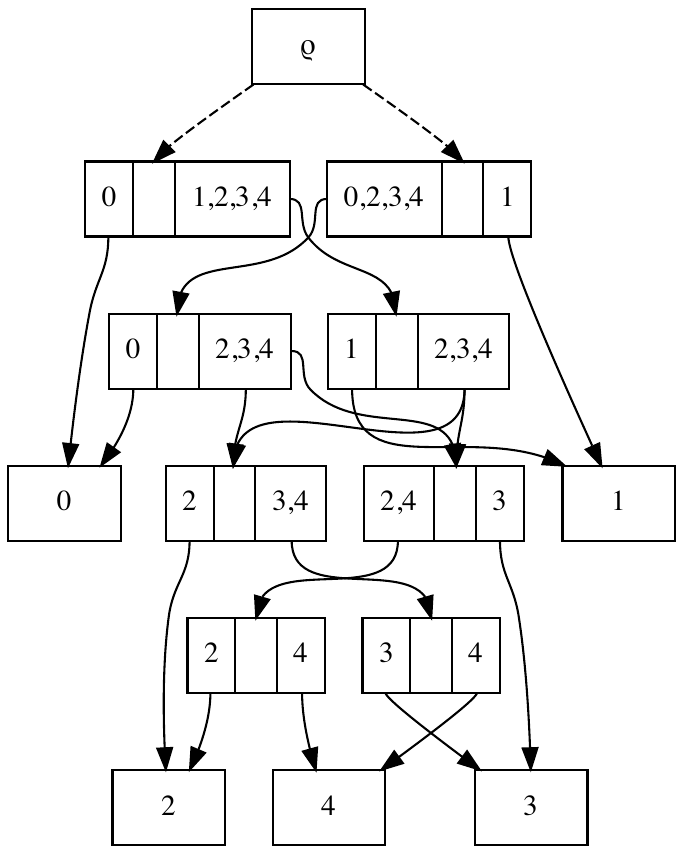}
	\caption{\
        An example showing how the sDAG is more flexible than the conditional clade distribution of~\citet{Hhna2012-pm,Larget2013-uo,Berling2024}.
        Specifically, we may have different splitting probabilities for the clade $\{2,3,4\}$ depending on which subsplit it is contained in (either $\subsplit{0}{2, 3, 4}$ or $\subsplit{1}{2, 3, 4}$), showing that our approach is a strict generalization of clade-conditional approaches.
	}
    \label{fig:sdagIsntCCD}
\end{figure}

Such a distribution is a generalization of previous work~\citep{Hhna2012-pm,Larget2013-uo} that led to topology distributions called \emph{conditional clade distributions} (CCDs); CCDs consist of a distribution of subsplits conditioned on a clade, i.e.\ a set of taxa.
Recently, these CCDs were attached to a directed acyclic graph structure similar to our sDAG \cite{Berling2024}.
The difference between this and the sDAG formulation is that the sDAG enables the expression of additional conditional dependencies on the sister clade (Figure~\ref{fig:sdagIsntCCD}).
We have shown such flexibility greatly improves fit~\citep{sbn}.
However, if those are not important, one can allow them to be independent of the sister clade, recovering conditional clade distributions in a graph structure.

In terms of probabilities of edges, the sDAG formulation takes a probability distribution on edges leaving each parent subsplit-clade, whereas the conditional clade distribution requires these distributions to be identical for parent subsplit-clades on the same clade.

If one has a sample of topologies, such as that from an MCMC algorithm, one can use it to fit the probabilities labeling the edges of the sDAG\@.
For an sDAG on rooted topologies, the probabilities for edges from a parent subsplit to its child subsplits are simply normalized frequency counts of the child subsplits in the sampled topologies that contain the parent subsplit.
If we desire a distribution on unrooted topologies, we can consider the sDAG containing all possible rootings of the topologies in the sample, and an expectation-maximization algorithm can be used to infer probabilities~\citep{sbn}.

To extend such a distribution on topologies to a distribution on trees, we attach parameterized distributions for branch lengths to the sDAG edges. 
Taking such an sDAG and inferring the branch length and subsplit distributions is a case of variational Bayesian phylogenetic inference. 
This approach was introduced and studied in \cite{vbpi,vbpi2}, taking the sDAG as a fundamental object, although it was described using different terminology: ``subsplit Bayesian networks.''
However, in this paper, as described below, we will be assigning a single fixed branch length to each edge of an sDAG, so that each topology in an sDAG corresponds to a phylogenetic tree.

\subsection*{Performing NNIs to the subsplit DAG}

\begin{figure}
	\centering
	\includegraphics[width=1\textwidth]{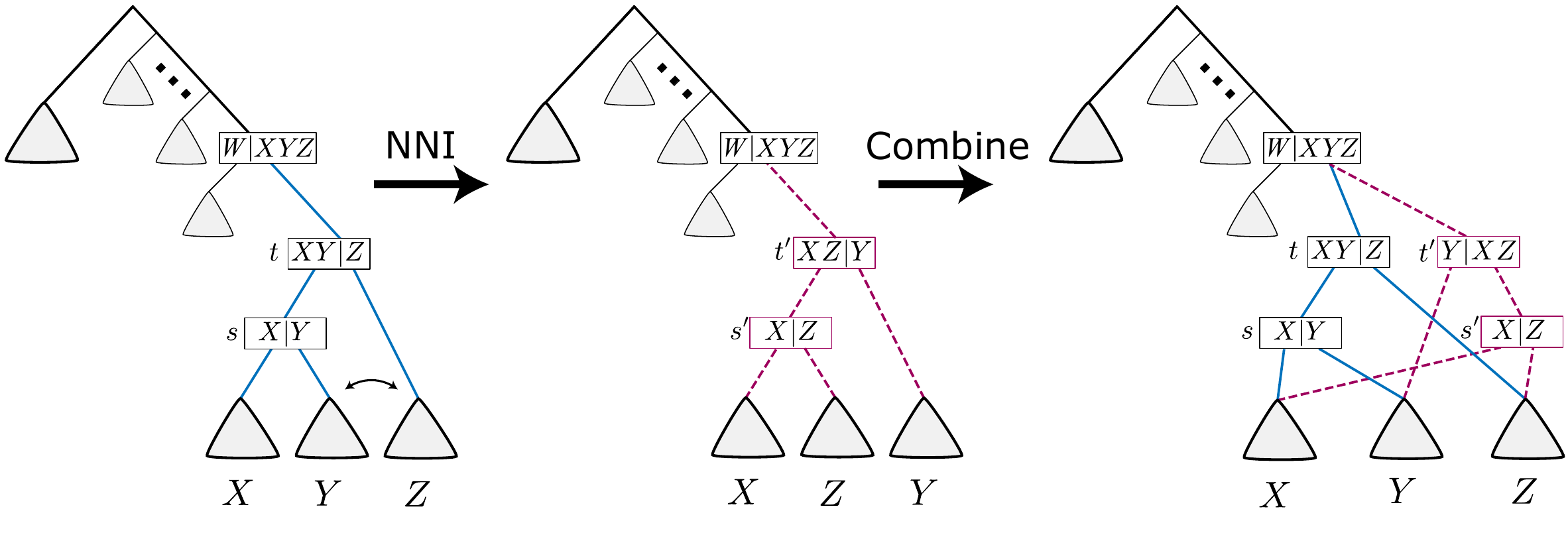}
	\caption{\
		Applying an NNI to a sDAG at subsplits $t$ and $s$, then combining the pre-NNI sDAG with the post-NNI sDAG into a single sDAG\@.
		Here $t'$ and $s'$ are the new nodes, and the red dashed edges are the new edges. Triangles are substructures in the sDAG, and $X$, $Y$, and $Z$ are clades.
	}
	\label{fig:dagNNI}
\end{figure}

The goal of this paper is to develop a systematic inference of the sDAG. 
In order to do so, we describe modifications of the sDAG that are analogous to the sort of modifications typically done to phylogenetic topologies.
\emph{Nearest-neighbor interchange} (NNI) is a common method to make minor topological modifications and grow the topological support.
An NNI swaps two subtopologies of adjacent subsplits to create a new topology (see left-hand arrow of Figure~\ref{fig:dagNNI}).

We can form an analogous operation on the sDAG, however, the NNI operation enlarges the sDAG rather than modifies it in place.
That is, on an sDAG, we can perform NNI on two subsplit-clades, then combine the pre-NNI sDAG with the post-NNI sDAG into a single new sDAG\@.
Consider Figure~\ref{fig:dagNNI}, where $t$ and $s$ are subsplits on the clade set $X\cup Y\cup Z$, with $t = \{X \cup Y,Z\}$ and $s = \{X,Y\}$.
Performing an NNI on clades $Y$ and $Z$ produces a new sDAG with subsplits $t'$ and $s'$, where $t' = \{X \cup Z,Y\}$ and $s'=\{X,Z\}$.

To construct the combined sDAG from the pre-NNI sDAG, we need only add subsplits $t'$, $s'$ (if not already in the sDAG) and edges (a) between the new nodes: $t' \to s'$, (b) parents of $t'$: $u \to t'$ for all $u$ where $u \to t$, and (c) descendants of the new subsplit-clades: $\sscl{s'}{X} \to u$ for all $u$ where $\sscl{s}{X} \to u$, $\sscl{t'}{Y} \to u$ for all $u$ where $\sscl{s}{Y} \to u$, and $\sscl{s'} {Z} \to u$ for all $u$ where $\sscl{t}{Z} \to u$.
The new edges described above will preserve all previously existing incoming and outgoing edges from each clade.
The addition of $t'$ and $s'$ creates a locally different splitting order between the $X$, $Y$, and $Z$ clades, leaving all other parts of the sDAG unmodified. 
We refer to the edge $t^\prime\to s^\prime$ as the \emph{central edge} of the NNI.

In our systematic inference methods, we maintain an sDAG with edges between all compatible subsplits, i.e.\ subsplits that can co-exist in a tree.
Such sDAGs exhibit favorable properties.
For example, when the pre-NNI sDAG has edges between all compatible subsplits, every new topology in the post-NNI sDAG is an NNI of a topology in the pre-NNI sDAG. 
Additionally, any new topology must contain the central edge of the NNI.
Both statements can fail when the original sDAG is missing compatible edges.
Details are given in the appendix.
However, after applying an NNI to an sDAG, the resulting sDAG may not have the maximum number of edges even when the original sDAG does. 
In particular, the sDAG may not have edges from $t^\prime$ to all compatible child subsplits and edges from all compatible parent subsplits to $s^\prime$.
Thus, when enlarging an sDAG via an NNI in our systematic search algorithms, we will include these additional edges.

\subsection*{Evaluating new additions to the sDAG}

Our inference algorithms proceed in a manner analogous to that for hill-climbing search of a single phylogenetic tree: evaluate all possible local modifications, and accept a modification according to an optimality criterion.
Thus, we require a means of evaluating an NNI of the sDAG to decide if we should apply it to the sDAG\@. 
Specifically, we describe two ways that generalize the calculation of a likelihood for a phylogenetic tree.
This is enabled by assigning a single fixed branch length to each edge of the sDAG, as described above, so that each topology in an sDAG corresponds to a phylogenetic tree.

Before describing these approaches in detail, we introduce some further notation.
We write $\bY$ for the given multiple sequence alignment written as a rectangular array, each sequence is one row.
The $i$th column of $\bY$ is $\bY^i$, which is the column vector containing the $i$th site of each sequence. 
As we deal only with fixed branch lengths, we write $p_\psi(\bY\mid \tau)$ for the phylogenetic likelihood for the data $\bY$, branch lengths $\psi$, and topology $\tau$: that is, this is the classical phylogenetic likelihood of a tree that has $\tau$ as the topology and branch lengths assigned according to $\psi$, assuming data $\bY$.
As commonly done, we assume site independence so that $p_\psi(\bY\mid \tau)=\prod_{i=1}^K p_\psi(\bY^i \mid \tau)$, where $K$ is the number of sites.
In this case, $p_\psi(\bY^i \mid \tau)$ is efficiently calculated by Felsenstein's pruning algorithm. 
In our implementation and benchmarking of our NNI-search algorithms, we use the Jukes-Cantor substitution model for simplicity.
We could use a general time reversible model, as long as we take fixed model parameters (equilibrium frequencies and substitution rates) for the sDAG.
One approach to infering these model parameters is to fit them on a single tree in the sDAG.
We also do not consider across-site rate variation, although this could be added to both of the algorithms here if desired.
Also, we optimize branch lengths rather than marginalize over them, in the interest of efficiency.
In Bayesian phylogenetics one typically considers a distribution of branch lengths, however this style of approximation has shown surprisingly good performance~\cite{Suchard2003-us,Anisimova2011-oh,Fourment2020-te}.

The two approaches, described in detail in the following two sections, are called ``top pruning'' and ``generalized pruning.''
Each approach is based on its own definition of likelihood, described roughly above and in more detail in the following sections.
However, for now, we note that generalized pruning is based on an across-tree marginalization so NNI operations are evaluated based on the sDAG as a whole, while top pruning is based on the (classical) likelihood of the single best additional tree enabled by the NNI.

\subsubsection*{Top Pruning}

The idea behind top pruning is to add the highest likelihood tree, obtained by an NNI, not previously in the sDAG.
Recall an NNI on an sDAG may introduce more than one new tree. 
Ideally one would select the NNI based on a computation like
\begin{gather}\label{eq:fullTopLikelihood}
\max_{\tau\in T_e} \prod_{j=1}^K p_\psi(\bY^j\mid \tau)
,
\end{gather}
where $T_e$ is the set of trees in the post-NNI sDAG with central edge $e$ introduced by the NNI.
However, this criterion does not yield an efficient algorithm, as we would need to enumerate all trees in $T_e$ and compute their likelihoods.

Our approach is to instead store local choices of subtrees, which may yield a sufficient approximation to the tree maximizing the likelihood above.
These local choices determine the topology and branch lengths, and so one can evaluate the likelihood of the tree in a classical way.
In this section we focus on giving an intuitive understanding of how these local choices work, and full details are found in the supplementary material.

\begin{figure}[h!]
\centering
\includegraphics[width=\textwidth]{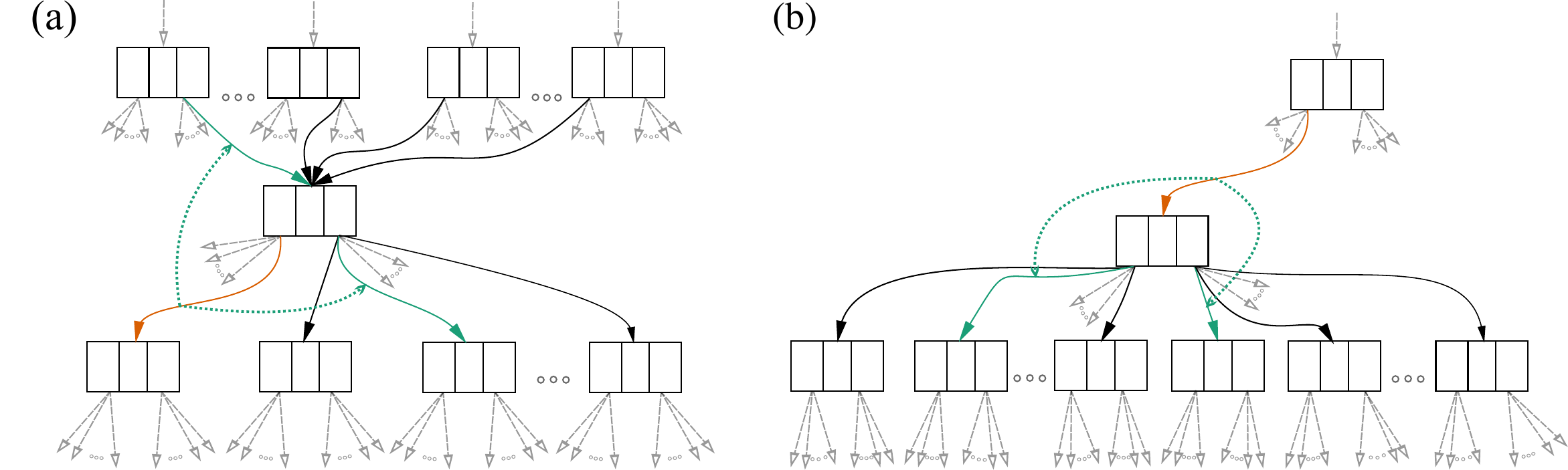}
\caption{An example of choice maps. 
In (a), the green solid edges are the parent and sibling edges returned by the rootward choice map (schematized by dashed line) for the given orange edge.
In (b), the green solid edges are the left and right child edges returned by leafward choice map for the given orange edge.
The ``best known tree'' for the given orange edge begins with the five highlighted edges: the orange edge common to both (a) and (b), the two solid green edges from (a), and the two solid green edges from (b).
}
\label{fig:inuitive_choices}
\end{figure}

The local choices of subtrees are implemented by a data structure we call a ``choice map'', which simply records a specific choice of neighboring edges for each edge of the sDAG (Figure \ref{fig:inuitive_choices}).
We have a ``rootward choice map'', which when given an edge returns a parent edge and sibling edge (assuming the given edge does not begin with the universal ancestor). 
We have a ``leafward choice map'', which when given an edge returns a left child edge and right child edge (assuming the given edge does not end with a leaf).

These choice maps can be recursively applied, defining tree structures as follows.
The leafward choice map defines a tree in the direction of the leaves from any given edge of the sDAG by recursively applying it until reaching the leaves.
To get a tree in the direction of the root from any given edge of the sDAG, one uses the rootward choice map to pick edges toward the root, and the leafward choice map to choose edges descending from these edges which haven't already been determined.
Combining these, we get a tree we call the ``best known tree'' for the edge.
Although it may not be the maximum likelihood tree in the sDAG containing the edge, by the way we construct the choice map (described below) we believe it should be a good approximation.

The following is a more precise description of the deterministic process to find the best known tree for a given edge.
\begin{enumerate}
\item Initialize a graph $G$ consisting of the single given edge. 
\item While $G$ is not a tree on the taxon set, we examine the edges of $G$. 
\begin{enumerate}
\item For each edge $e$ not ending in a leaf: if $G$ does not contain a left and right child edge of $e$, then we add the two children provided by the leafward choice map at $e$.
\item For each edge $e$ not originating at the universal ancestor: if $G$ does not contain a parent and sibling edge of $e$, then we add the parent and sibling edges provided by the rootward choice map at $e$.
\end{enumerate}
\end{enumerate}

Furthermore, we can perform this construction after performing an NNI on the sDAG\@.
If we cache partial likelihood vectors (PLVs) at the nodes of the sDAG, we can evaluate the likelihood of a best known tree in time that is constant in the number of leaves of the sDAG\@.

Next we explain which edges are selected for the choice maps.
Suppose we generate an sDAG from a list of trees ordered by likelihood (the highest likelihood tree is first).
To each edge of the sDAG, we assign the branch length from the first tree of the list containing the edge. 
Consider Figure \ref{fig:inuitive_choices}, where the given edge is highlighted in orange.
In panel (a), we select the parent and sister edges highlighted in green; the two child edges are selected and highlighted in panel (b).
The orange edge may appear in more than one tree of the original list, but since the trees are ordered by likelihood, we focus on the first (highest likelihood) tree with this edge.
The four neighboring edges are taken from this tree and the choices are recorded in the choice maps.
In particular, note that the choice maps depend on the likelihood-ordered list of trees used to construct the sDAG\@.

Next we will write out a complete example for an sDAG constructed from two input trees (Figure~\ref{fig:choiceMapInitialization}).
In this toy example we have trees $\tau_0$ and $\tau_1$ on seven taxa. 
We assume the classical setup for phylogenetic inference, with a sequence alignment and model, as well as branch lengths along the edges, so we can calculate the likelihood of a tree topology.
Assume that $\tau_0$ is of higher likelihood than $\tau_1$.
Figure \ref{fig:choiceMapInitialization}a depicts $\tau_0$, $\tau_1$, and the sDAG spanned by the two.
This sDAG contains two additional trees (Figure~\ref{fig:choiceMapInitialization}b).
The edges of the sDAG are labeled with the maximum likelihood input tree ($\tau_0$ or $\tau_1$) containing the edge.
To build the best known tree for an edge, which need not be one of the input trees, we begin with the given edge and attach the immediately neighboring edges from the $\tau_i$ of the label. 
For the attached neighboring edges that touch neither root nor leaf, we must choose two additional edges (either children or parent and sibling) to flesh out the best known tree. 
Such additional edges are taken from the input tree (again $\tau_0$ or $\tau_1$) given by the label of the attached edge. 
We continue this process until a tree is fully constructed.
The best known tree for the edges of the sDAG labeled $\tau_0$ is $\tau_0$.
The best known tree for edges labeled $\tau_1$ is $\tau_2$ for edges above the subsplit $\subsplit{1}{2,3,4,5}$ and $\tau_3$ for edges below.

\begin{figure}[th!]
\centering
\includegraphics[width=0.98\textwidth]{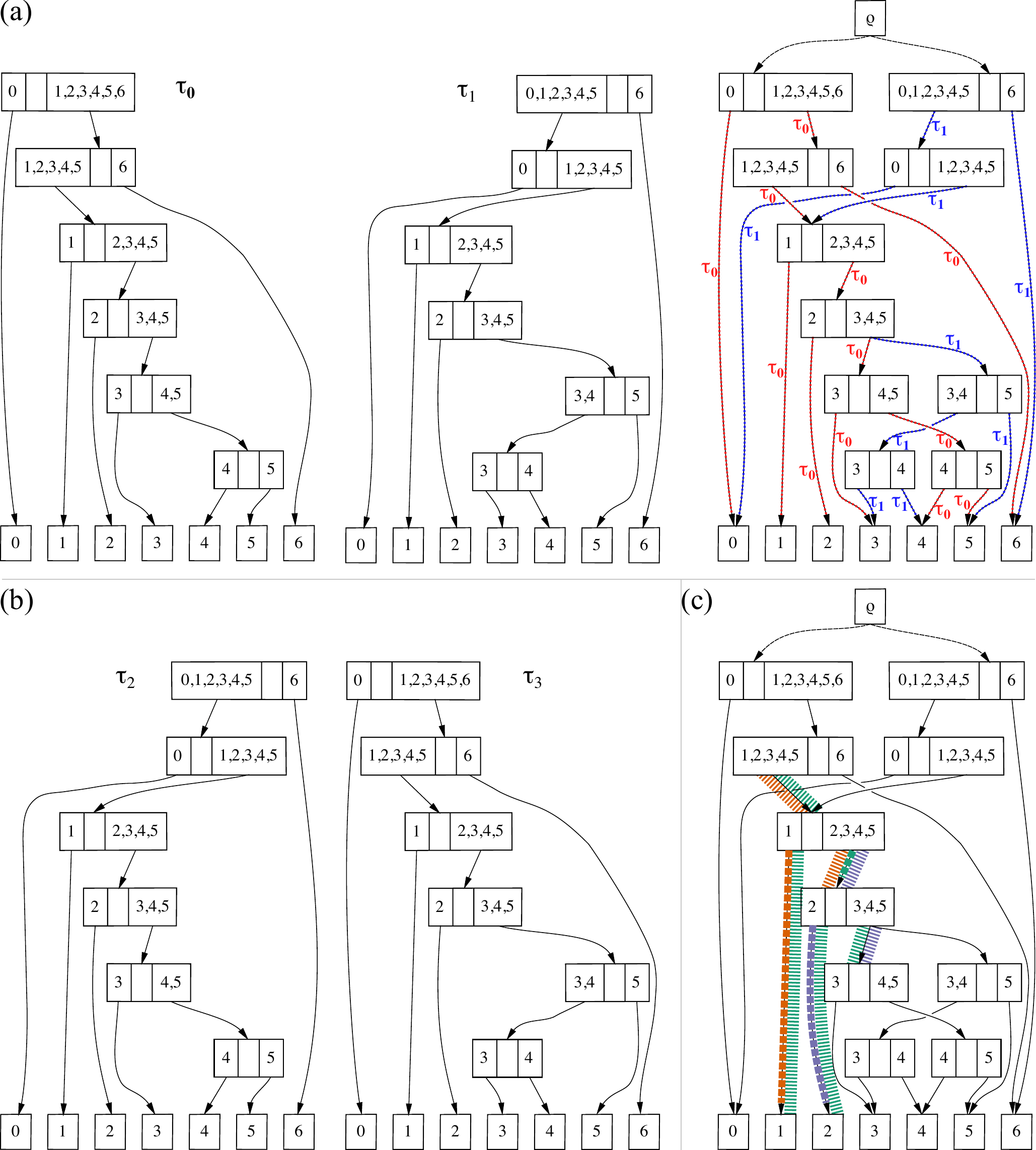}
\caption{Choice map initialization for an sDAG, assuming $\tau_0$ has higher likelihood than $\tau_1$.
In (a) are two trees on seven taxa and the sDAG built from the trees.
The edges of the sDAG are labeled with the topology, $\tau_0$ in red or $\tau_1$ in blue, where the edges first appeared.
In (b) are the two additional topologies in the sDAG.
In (c), we highlight the edges that have an option in the choice maps.
In orange is an edge (bold dashed line) and its chosen neighbors (lightweight dashed line), with a choice of parent edge;
in purple is an edge and its chosen neighbors, with a choice of right child edge;
and in green is an edge and its chosen neighbors, with a choice of parent edge and right child edge.}
\label{fig:choiceMapInitialization}
\end{figure}

For most edges in the sDAG of Figure \ref{fig:choiceMapInitialization}, there is only one choice for parent, sibling, and child edges.
The non-trivial choices can be phrased as: 
\begin{itemize}
\item Does $\pcsp{\subsplit{1}{2,3,4,5}}{\{1\}}$ use the parent edge from\\ $\subsplit{1,2,3,4,5}{6}$ or $\subsplit{0}{1,2,3,4,5}$?
\item Does $\pcsp{\subsplit{1}{2,3,4,5}}{\subsplit{2}{3,4,5}}$ use the parent edge from\\ $\subsplit{1,2,3,4,5}{6}$ or $\subsplit{0}{1,2,3,4,5}$?
\item Does $\pcsp{\subsplit{1}{2,3,4,5}}{\subsplit{2}{3,4,5}}$ use the right child edge to\\ $\subsplit{3}{4,5}$ or $\subsplit{3,4}{5}$?
\item Does $\pcsp{\subsplit{2}{3,4,5}}{\{2\}}$ use the sibling edge to\\ $\subsplit{3}{4,5}$ or $\subsplit{3,4}{5}$?
\end{itemize}
\sloppy
In Figure \ref{fig:choiceMapInitialization}c we highlight the edge $\pcsp{\subsplit{1}{2,3,4,5}}{\{1\}}$ in orange along with its chosen neighbor edges. Additionally, we highlight $\pcsp{\subsplit{1}{2,3,4,5}}{\subsplit{2}{3,4,5}}$ in green and $\pcsp{\subsplit{2}{3,4,5}}{\{2\}}$ in purple, as well as their chosen neighbor edges.

\fussy
The key point is each edge of the sDAG is assigned a tree systematically.  
Importantly, we can extend this to assign trees to the central edges of NNIs of the sDAG.
The details of extending the choice maps to such edges are given in the supplementary materials.
Furthermore, this assignment of trees allows for efficient phylogenetic likelihood calculations.
We define the ``top pruning likelihood'' of an NNI to be the likelihood of the best known tree for the central edge of the NNI.

The top pruning algorithm proceeds in the following manner. 
Suppose we are given a list of trees, ordered by likelihood.
\begin{enumerate}
\item Initialize $\sdag$ to the sDAG spanned by the topologies from the list of trees.
\item Assign branch lengths to the edges of $\sdag$ by taking the branch length from the first tree in the list containing a given edge.
\item Initialize choice maps for the edges of $\sdag$.
\item Add edges between all compatible parent and child subsplits. 
Assign choice maps with an $\argmax$ strategy (see Supplement, equation \eqref{eq:tpExtraEdgeChoices}). 
Assign branch lengths optimizing best known trees for these edges.
\item Create a list of NNIs, ordering by top pruning likelihood, for each edge in the sDAG (two NNIs per sDAG edge). 
We do not record NNIs in the list if they do not enlarge $\mathcal{D}$.
\item\label{item:tpLoopStart} Enlarge $\mathcal{D}$ to $\mathcal{D}^\prime$ with the highest top pruning likelihood NNI of the list.
	Assign choice maps and branch lengths as in \eqref{eq:topPruningChoices}.
\item Enlarge $\mathcal{D}^\prime$ to $\mathcal{D}^{\prime\prime}$ by adding all additional edges between compatible subsplits. 
	Assign choice maps and branch lengths as in \eqref{eq:tpExtraEdgeChoices}.
\item Remove the NNI of step \ref{item:tpLoopStart} from the list. 
\item\label{item:tpAddNNIs} Insert into the list, while maintaining the order by top pruning likelihood, the new NNIs that enlarge $\mathcal{D}^{\prime\prime}$.
The top pruning likelihood is not updated for NNIs already present in the list.
\item Return to step \ref{item:tpLoopStart} with $\mathcal{D}^{\prime\prime}$ in place of $\mathcal{D}$.
\end{enumerate}
The looping step of the algorithm repeats either for a fixed number of iterations or until the likelihood of all NNIs in the list are below a given threshold.

\subsubsection*{Generalized Pruning}

The generalized pruning (GP) objective applies the NNI to an sDAG that most increases a topology-marginal composite likelihood called the generalized pruning likelihood~\citep{GP}.
Specifically, let $\sdag_e$ be the sDAG from applying an NNI with central edge $e$.
Let $T_e$ be the set of trees of $\sdag_e$ that contain $e$. 
The GP algorithm applies the NNI that maximizes
\begin{gather}\label{eq:gpLikelihood}
\prod_{j=1}^K\sum_{\tau\in T_e} p_{\psi}(\bY^j\mid \tau) \, p(\tau\mid e),
\end{gather}
across edges $e$. 
The term $p(\tau\mid e)$ is the relative prior probability of $\tau$ among the topologies of $T_e$. 
Explicitly, 
\begin{gather*}
p(\tau\mid e) = \frac{p(\tau)}{\sum_{\tau^\prime\in T_e} p(\tau^\prime)},
\end{gather*}
where $p(\tau)$ is a prior distribution on topologies. 
Our implementation uses the uniform distribution, so that $p(\tau\mid e)=\frac{1}{|T_e|}$.
The likelihood in \eqref{eq:gpLikelihood} is the \emph{per‑edge marginal likelihood} introduced in \cite{GP}, the details of which are given in the section of the same name. Specifically, this is the generalized pruning per-edge composite marginal likelihood of the edge $e$ in the sDAG after applying the NNI. We call this likelihood the generalized pruning likelihood of the NNI.

Note that we call \eqref{eq:gpLikelihood} a ``composite likelihood'' because it is taking a product of per-site marginal likelihoods rather than marginalizing over the complete likelihood.
This correct marginal likelihood of the edge would be
\begin{gather*}
\sum_{\tau\in T_e} p(\tau\mid e)  \prod_{j=1}^K p_{\psi}(\bY^j\mid \tau).
\end{gather*}
However, there are no efficient means known of directly calculating this correct marginal likelihood, whereas the generalized pruning version scales linearly in the size of the sDAG\@.
The generalized pruning marginal likelihood is computed with a traversal of the sDAG that populates partial likelihood vectors at each node, where each node has a PLV for each site of the sequence alignment.
In contrast, to compute the true marginal likelihood, each node would require a PLV not just for each site, but each site and distinct topology with that node. 
That is to say, the GP likelihood is computed using PLVs that are shared by topologies, while the true marginal likelihood would require separate PLVs for topologies.
Although it is not the same as the true marginal likelihood, we have found that the GP likelihood is a sufficient approximation of the true likelihood for the purpose of optimizing branch lengths~\cite{GP}.

Suppose we are given a list of starting trees ordered by likelihood, the generalized pruning systematic inference algorithm is as follows.
\begin{enumerate}
\item Initialize $\sdag$ to the sDAG spanned by the topologies from the list of trees.
\item Assign branch lengths to the edges of $\sdag$ by taking the branch length from the first tree in the list containing a given edge.
 Optionally, we further optimize the branch lengths to maximize the overall generalized pruning likelihood of the sDAG.
\item Add edges between all compatible parent and child subsplits; assign GP per-edge composite marginal likelihood optimized branch lengths to these edges.
\item\label{item:gpLikelihood} Create a list of NNIs, ordering by generalized pruning likelihood, for each edge in the sDAG (two NNIs per sDAG edge). 
We do not record NNIs in the list if they do not enlarge $\mathcal{D}$.
\item\label{item:gpLoopStart} Enlarge $\mathcal{D}$ to $\mathcal{D}^\prime$ with the highest GP likelihood NNI of the list.
Add any additional edges between either of the new subsplits and compatible existing subsplits. 
Assign GP per-edge composite marginal likelihood optimized branch lengths to the new edges. 
\item Remove the NNI of step \ref{item:gpLoopStart} from the list. 
\item\label{item:gpAddNNIs} Insert into the list, while maintaining the order by GP likelihood, the new NNIs that enlarge $\mathcal{D}^{\prime\prime}$ .
The GP likelihood is not updated for NNIs already present in the list.
\item Return to step \ref{item:gpLoopStart} with $\mathcal{D}^\prime$ in place of $\mathcal{D}$.
\end{enumerate}
The looping step of the algorithm repeats either for a fixed number of iterations or until the GP likelihood of all NNIs in the list are below a given threshold.
The GP likelihoods in items \ref{item:gpLikelihood} and \ref{item:gpAddNNIs} are calculated after optimizing branch lengths.

\subsection*{Implementation of systematic inference}

The necessary functionality for both NNI-searches are implemented in the Python-interface C++ library \bito (\url{https://github.com/phylovi/bito}) and an interface to perform a search is further implemented in Python (\url{https://github.com/matsengrp/sdag-nni-experiments}).
Both top pruning and generalized pruning use PLVs for fast and efficient likelihood calculations.
The PLVs for top pruning are defined as usual for a two-pass version of Felsenstein's pruning algorithm and are propagated along the choice maps. 
The PLVs for generalized pruning follow a different pattern and are discussed in detail in \cite{GP}.
It is these likelihoods and associated PLVs that dominate the computational expense of our algorithms, while maintaining the the graph structure of the current sDAG and potential additions is secondary.

When new edges are introduced for top or generalized pruning, the optimization of associated branch lengths is done as follows.
We take one branch length, hold the others fixed, apply Brent optimization (maximizing the likelihood of the edge in terms of best known tree likelihood or generalized pruning), repeat with another branch length, and continue until values have approximately converged.
We leave the branch lengths of old edges unaltered.
This approach showed good performance for branch length optimization in previous work~\cite{GP}.

\subsection*{Benchmarking setup}
As in previous work (e.g., \cite{Lakner2008, Hhna2012-pm, Larget2013-uo, whidden2015, Fourment2020-te}), we compare our inferences to very long \mrbayes\ ``golden runs''. 
We use common benchmark data sets, which we call the DS-datasets, as well as a 100 influenza A (flu) sequence data set from \cite{vbpi2}.
Each very long run of \mrbayes\ yields numerous topologies and posterior density estimates of these topologies, which form the empirical posterior.
We take the 95\% credible set to be the minimal size set of topologies whose cumulative density is at least 95\%.
Basic information for the data sets and their empirical posteriors is in Table \ref{table:dataSetStats}.
We say a subsplit is credible if it appears in at least one topology of the 95\% credible set. 
This is different from taking a credible set of subsplits from a topology marginalized probability on subsplits.

\begin{table}\centering
\begin{tabular}{ccccc}
&&& 95\% credible set & empirical posterior
\\
data set	& taxa & sites & topology count & topology count
\\\hline
1 & 27 & 1949 & 42 & 1245
\\\hline
3 & 36 & 1812 & 16 & 240
\\\hline
4 & 41 & 1137 & 219 & 4539
\\\hline
5 & 50 & 378 & 260894 & 298768
\\\hline
6 & 50 & 1133 & 157942 & 195816
\\\hline
7 & 59 & 1824 & 756 & 6000 
\\\hline 
8 & 64 & 1008 & 4329 & 26442
\\\hline
flu & 100 & 1681 & 16475 & 20262
\end{tabular}
\caption{Description of the DS-datasets.}
\label{table:dataSetStats}
\end{table}

We will use the term ``diffuse'' to qualitatively describe the size of the posterior credible set.
Taking the numbers in Table 1 (either for credible topologies or posterior topologies) the order of data sets from least to most diffuse is DS3, DS1, DS4, DS7, DS8, DS6, and DS5.
The data sets DS1, DS3, DS4, and DS7 are not diffuse, while DS5 and DS6 are very diffuse, and DS8 is somewhere between.
While there is another data set, commonly called DS2, the posterior is only a few topologies and so we ignore it here.

For each data set, we calculate how much of the posterior density is captured per NNI-search iteration and by run time.
We use short runs of \mrbayes\ for a direct comparison of run time.
These short runs of \mrbayes\ use similar specifications to the golden runs, except we record all topologies (i.e., sample frequency 1 and no burn-in).
With these short runs, we have two comparisons with an NNI-search.
First, we compare the cumulative posterior density of the topologies from an NNI-search to that of the topologies from the short run.
While this is a fair comparison in terms of run time, comparing the density from topologies of an sDAG to topologies from a simple list puts the short runs at a disadvantage.
Our second comparison addresses this issue by comparing the density of topologies from an NNI-search to the density of the topologies in the sDAG spanned by the topologies from the short run of \mrbayes.
This gives a more accurate comparison in terms of cumulative density, as we compare an sDAG to an sDAG.
We emphasize that while our systematic inference algorithms produce sDAGs that represent trees, we ignore branch lengths and work with topologies for these comparisons.
While we can efficiently build an sDAG from a large list of topologies, there is required processing of topologies from \mrbayes\ before building the sDAG (rerooting, ordering of splits of subtopologies in the newick string, etc.).
When memory is not a constraint, this processing of topologies is also quick and efficient. 
However, for a very large number of topologies (as is the case with DS5), we use a less efficient method.
The run-time we report for the sDAG spanned by topologies of \mrbayes\ includes the run-time for \mrbayes, the estimated run-time of processing topologies with the efficient implementation, and the run-time for constructing the sDAG from the processed topologies.
As discussed in the following section, generalized pruning performs poorly and so we do not show it in every benchmark.

We start all of the searches (top pruning, generalized pruning, and short runs of \mrbayes) with the maximum posterior density topology of the golden run with branch lengths optimized by \iqtree\ for likelihood. 
Starting the searches at the maximum posterior density topology is a best-case scenario for performance, as the search starts at the highest peak of the distribution.

We also compare top pruning and the short runs of \mrbayes\ by the quality of subsplits. 
We calculate both how many of the credible subsplits are found by a search and how many of the subsplits found by a search are credible. 
The comparison with top pruning and short runs of \mrbayes\ matches the two on the number of subsplits encountered.
In the ideal setting, the search methods would only add subsplits present in the posterior.

Our search methods can take multiple trees as input, so we investigate two additional starting points.
The first is a ``best case'' minor variation: we begin the search with the top 10 posterior density topologies with likelihood optimized branch lengths. 
The second is a more realistic setting and randomized: we take the unique topologies produced by 200 runs of \raxml~\cite{raxml} with likelihood optimized branch lengths.
We find that \raxml\ produces more distinct topologies than \iqtree\ when given random starting trees (data not shown).
We compare the performance with different starting trees by cumulative density per iteration.
There is not a comparison with short runs of \mrbayes.
It is worth noting that we may use the trees from \raxml\ without any knowledge of the credible set or an empirical posterior distribution.

For consistency, all timing scripts were run with the \slurm\ job scheduler with exclusive access to a single node.
The experiments of this section can be recreated by following the instructions in the experiments GitHub repository (\url{https://github.com/matsengrp/sdag-nni-experiments}).

\section*{Results}

We first describe the disappointing performance of generalized pruning, and then the somewhat better performance of top-pruning.
For top-pruning we also provide a more detailed analysis, discussing run time, performance with multiple starting points, and performance on the larger flu data set.
For this first set of analyses, we will be interested in the total posterior density of the topologies present in the sDAG generated by the various methods through time.

The performance of generalized pruning is lackluster. 
Regardless of data set, generalized pruning fails to capture a reasonable amount of posterior density in a reasonable amount of time (Figure \ref{fig:generalizedPruningFoundPosterior1} for DS1 and DS3-6, Figure \ref{fig:generalizedPruningFoundPosterior2} for DS7-8).
It is not competitive with the list of topologies from the short runs of \mrbayes.
Even on DS3, the least diffusive data set, generalized pruning spends many iterations adding subsplits and edges not contributing to the density.

\begin{figure}[!t]
\includegraphics[width=0.90\textwidth]{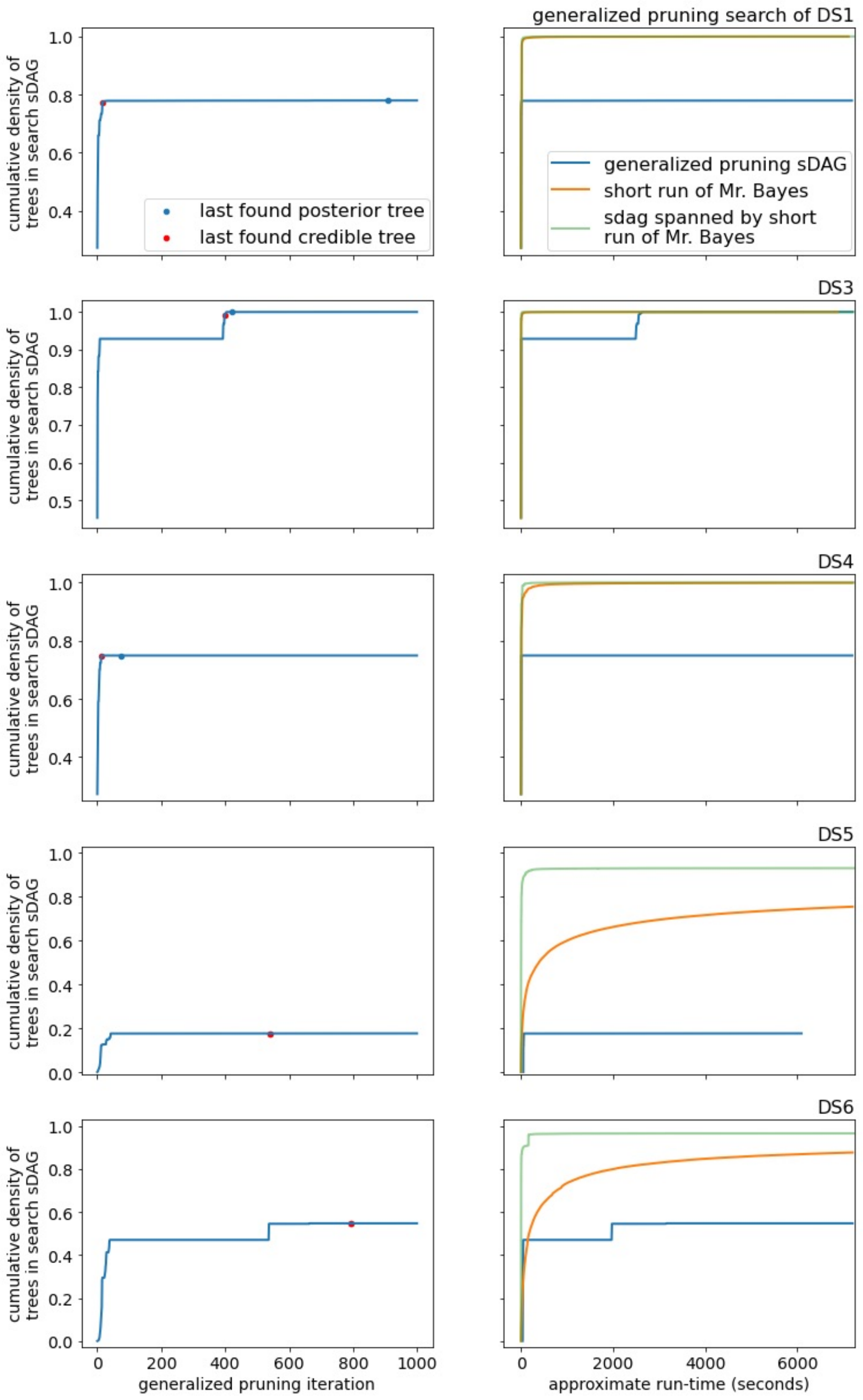}
\caption{The empirical posterior density found by generalized pruning on a subset of the DS-datasets and a comparison with MCMC.
The red dots in the plots of the left indicate the last iteration finding a topology of the credible set and the blue dots indicate the last iteration finding a topology of the empirical posterior.  
}
\label{fig:generalizedPruningFoundPosterior1}
\end{figure}

Top pruning fares better than generalized pruning on the DS-datasets.
It consistently outperforms the topologies of a short MCMC run and is competitive with the sDAG spanned by these topologies.
While top pruning experiences long lengths of time with little to no increase in coverage of the posterior density, it usually finds its way back and captures much of the posterior density (Figure \ref{fig:topPruningFoundPosterior1} for DS1 and DS3-6, Figure \ref{fig:topPruningFoundPosterior2} for DS7-8).

The credibility of subsplits found by top-pruning tends to be worse than the short runs of \mrbayes\ (recall that we say that a subsplit is credible if it appears in at least one topology of the 95\% credible set).
Top pruning and \mrbayes\ are very similar on DS3 and DS7, but on the other datasets top pruning adds many non-credible subsplits along with credible subsplits (Figure \ref{fig:tp_uniform_prior_sdag_quality}).
The orange (magenta) lines are essentially the true-positive rates of top pruning (short MCMC, respectively) in terms of subsplits.
While not included here, we performed the same analysis with sDAG edges rather than subsplits and the results were similar.

Multiple starting trees are beneficial, but the benefit depends on the trees. 
Using the 10 highest probability trees from the credible set, or high likelihood trees generated by \raxml, of course, provides the gain in cumulative density of the additional topologies (Figure \ref{fig:tp_multiple_start}).
Past that, performance improvements are more subtle.
Suppose we take the sDAG formed by the multiple initial trees and compare with the sDAG from running top pruning, without the additional trees, until we have approximately the same cumulative density of topologies in the sDAGs.
After applying a few iterations of top pruning to both of these sDAGs, one may have a higher cumulative density of topologies than the other.
Using the 10 highest probability trees usually does not give this kind of improvement, while using the high likelihood \raxml\ trees often does.
That is, using the trees from \raxml\ improves the selection of NNIs toward the beginning of the search, while using the 10 highest probability trees does not.
One possible explanation is the top 10 trees may be too close to each other in the posterior distribution and top pruning would visit these topologies fairly quickly and approximately in order, while the trees from \raxml\ may be near different peaks of the distribution.

Our implementation of top pruning does not exhibit linear run time with the number of iterations (Figure \ref{fig:topPruningRunTimeDS4}). 
This is important to consider, since we compare top pruning against MCMC by run-time. 
Later iterations of top pruning may add more edges than in previous iterations. 
This means we expect more likelihood calculations on average in later iterations than in early iterations. 
With this, we do not expect linear run time is possible, but it is at worst quadratic.
There is also the issue of scaling in terms of the number of taxa (the number of leaves in the sDAG, the number of sequences in the alignment, etc.).
The performance of top pruning on the flu data set is consistent with that of the DS-datasets (Figure \ref{fig:tp_flu100}).
This suggests top pruning may scale well with the number of taxa.

Lastly, we compare the empirical posterior probability of an edge to the likelihood of the best known tree for the edge (the top pruning likelihood). 
By posterior probability of an edge, we mean the topology marginalized probability, which is the sum of posterior probabilities of topologies containing the edge. 
Across data sets, high posterior probability corresponds well to high likelihood of the best known tree of an edge
(Figure \ref{fig:posterior_vs_tp}). 
This suggests that the challenge with the top pruning algorithm is its tendency to find and include edges outside of the credible set of topologies, but top pruning does a fair job of ranking edges within the credible set.

\vspace{-10ex}
\begin{figure}[!b]\centering
\includegraphics[width=0.90\textwidth]{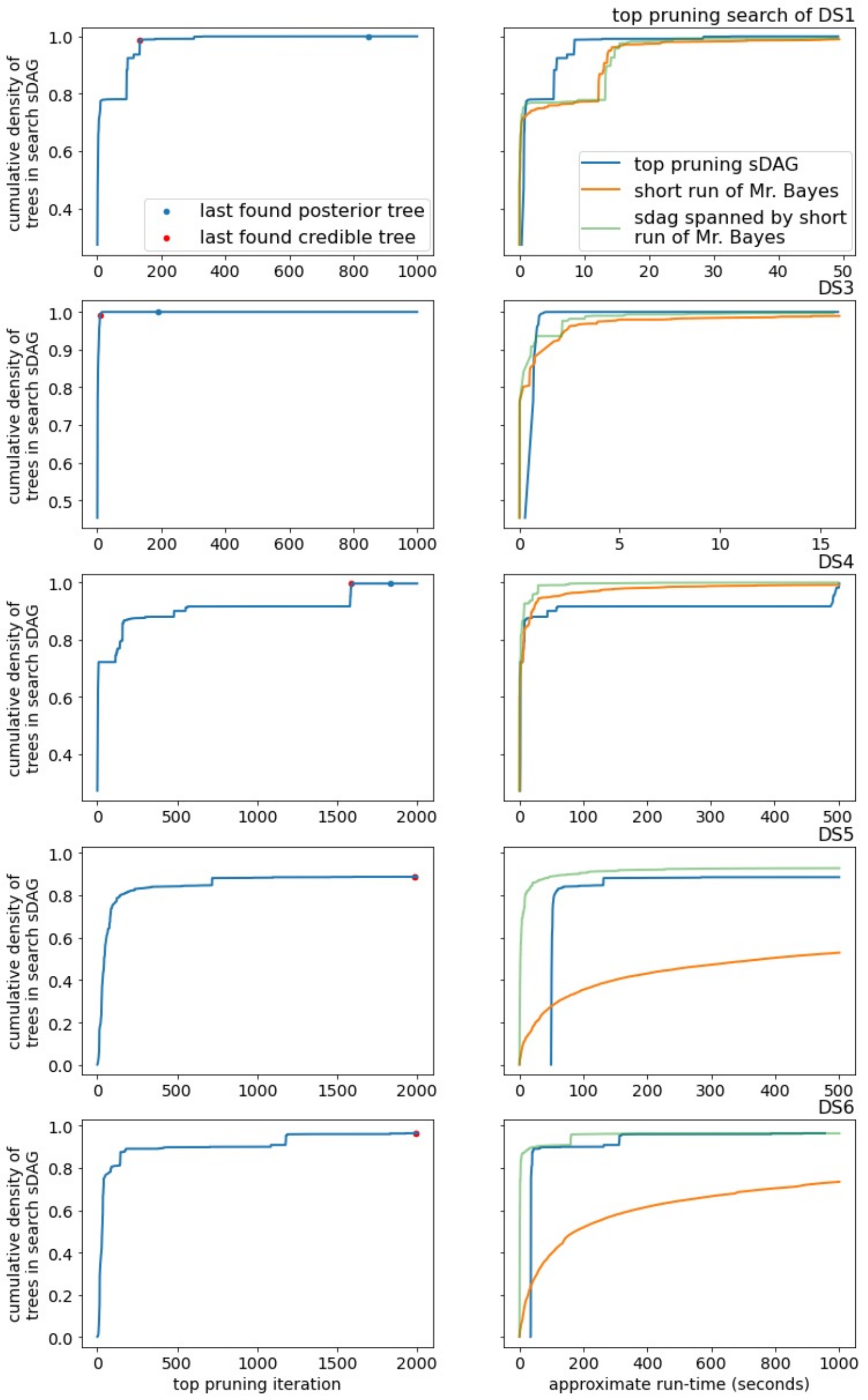}
\caption{The empirical posterior density found by top pruning on a subset of the DS-datasets and a comparison with MCMC.
Note the $x$-axis scale varies between data sets.
}
\label{fig:topPruningFoundPosterior1}
\end{figure}

\begin{figure}[!h]\centering
	\includegraphics[width=0.98\textwidth]{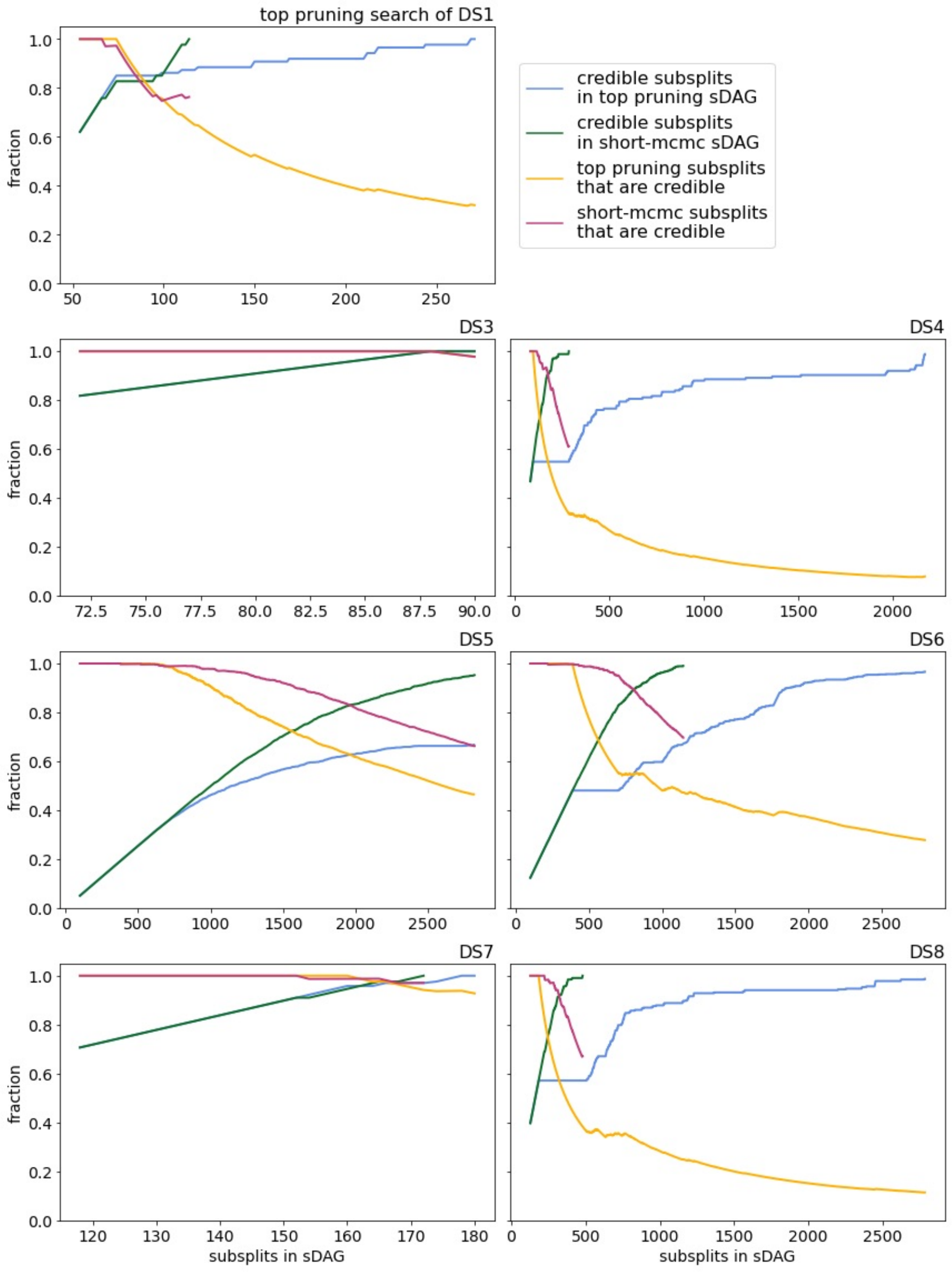}
\caption{Quality of sDAGs generated by top-pruning and short-MCMC in terms of subsplits.
For each data set, two sDAGs were iteratively enlarged (by top pruning or aggregating trees from \mrbayes).
On the $x$-axis we have the number of subsplits in the sDAGs and on the $y$-axis we have
the fraction of credible subsplits in the sDAGs and the fraction of subsplits in the sDAGs that are credible.
In all cases, the initial sDAGs are given by the maximum posterior density topology, and so all subsplits of the initial sDAGs are credible.
}
\label{fig:tp_uniform_prior_sdag_quality}
\end{figure}

\begin{figure}[!h]\centering
	\includegraphics[width=0.98\textwidth]{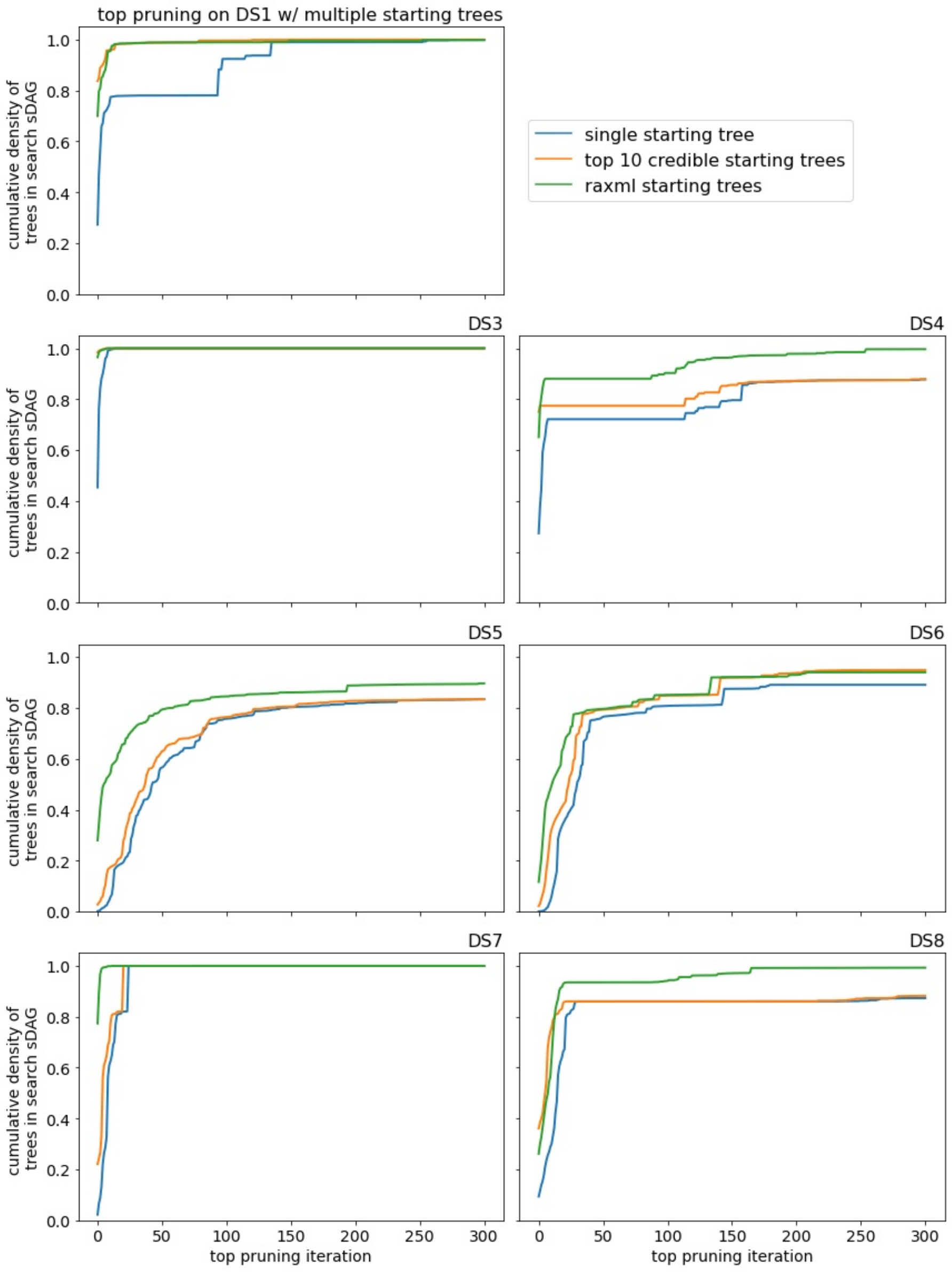}
	\caption{Performance of top pruning with different starting points.}
	\label{fig:tp_multiple_start}
\end{figure}
\clearpage

\begin{figure}[!t]\centering
	\includegraphics[width=0.98\textwidth]{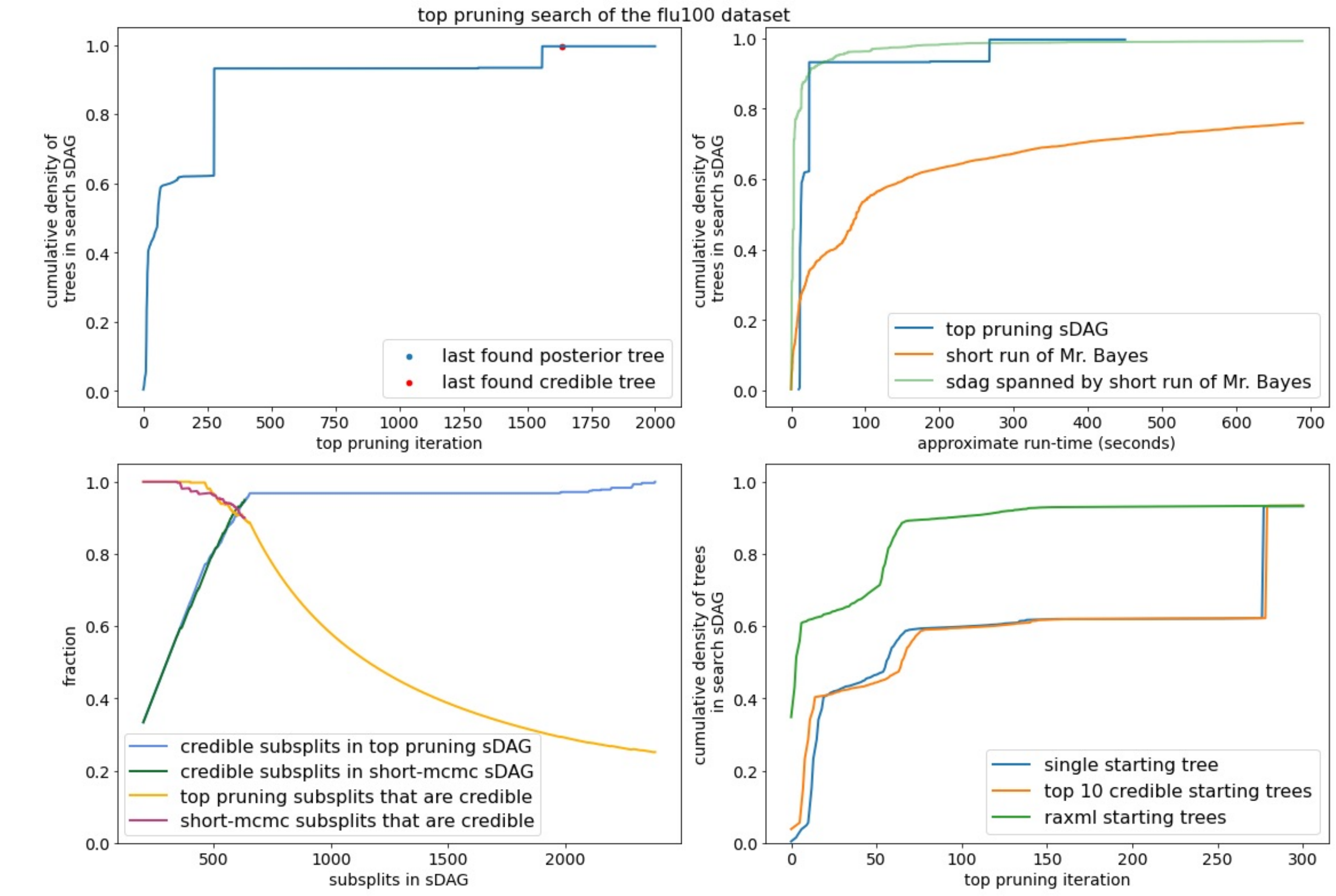}
	\caption{Performance of top pruning on the flu100 data set.}
	\label{fig:tp_flu100}
\end{figure}

\begin{figure}[!h]\centering
\includegraphics[width=0.9\textwidth]{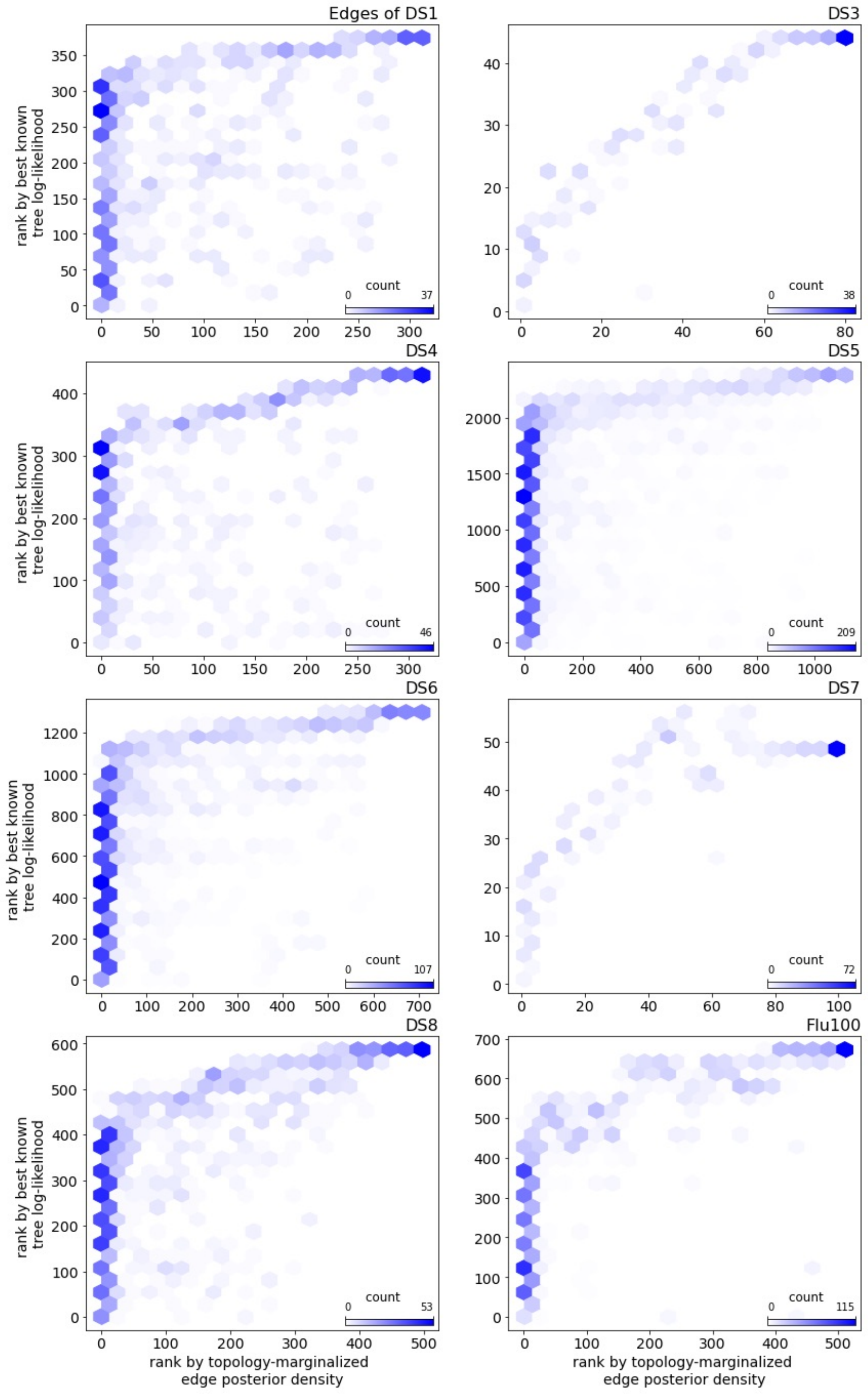}
\vspace{-1ex}
\caption{
A comparison of edges ranked by posterior probability and top pruning likelihood.
The edges are ordered by their posterior probability along the $x$-axis (with the lowest probability edge at $0$) and by their top pruning likelihood along the $y$-axis (with the lowest likelihood edge at $0$). 
Edges are binned for counts, with a darker color meaning a higher count.
}
\label{fig:posterior_vs_tp}
\end{figure}

\clearpage

\section*{Discussion}

In this paper we develop strategies to identify potentially high posterior density regions of tree space.
This fits as part of a larger program to infer Bayesian phylogenetic posterior distributions via optimization~\cite{sbn, vbpi,vbpi2, GP}.
We emphasize that our goal for this paper is to find high-posterior regions of topologies, rather than trying to determine posterior densities of these topologies.
Once a region is located by a search method, be it a systematic inference or aggregation of trees, another method such as variational inference may be used to infer posterior densities in this region.
Such methods may additionally infer distributions for branch lengths and evolutionary process parameters, allowing for joint posterior density estimates.
One could also imagine using sDAG inference to build a means of estimating split frequency as a recent development of adaptive MCMC~\cite{Meyer2021-xl}.

While top pruning performs well on our test data sets, the number of sequences is relatively small, with 100 being the largest. 
We do not yet know how well top pruning scales to larger data sets, where we have thousands or tens of thousands of sequences.
Our current method of benchmarking is not possible for such data sets, as we require an empirical estimate of the true posterior.
However, since top pruning lags behind the performance of aggregating MCMC-sampled trees into an sDAG on these small datasets, that seems like a more promising approach for future work and benchmarking.

Generalized pruning demonstrates poor performance with an NNI search, yet it has been shown useful when determining branch lengths \cite{GP}. 
We have not found any way to improve the performance here. 
Perhaps the generalized pruning likelihood of an edge does not sufficiently capture the quality of the newly introduced topologies.

One could imagine extensions allowing more general additions to the sDAG in our search algorithms.
When proposing and ranking NNIs for top pruning, we could try edges other than those present in the choice maps.
For example, rather than using the parent of $t$ specified in the choice map, we could try the other edges to $t$ as candidate parents for $t^{\prime}$ (such as the dotted blue edge in Figure \ref{fig:topPruningChoices}).
This requires only a little extra evaluation and branch length optimization.

Furthermore, when we apply the best NNI to the sDAG and remove it from the ranked list, the best known trees for some of the NNIs in the list may now have an edge present in the current sDAG that was not present before.
Such an edge has a branch length and choice maps that may yield a different likelihood from before.
In such cases, we could trigger re-optimization of branch lengths for that NNI and replace the corresponding likelihoods, PLVs, and choice maps.

An alternative version of top pruning adds exactly five new edges of the best known tree associated to an NNI.
That is, in each iteration, there are at most five edges of the best known tree for an NNI that are new to the sDAG.
We can add only these edges to the sDAG, rather than edges between all compatible subsplits or the five types of edges detailed in the subsection on Performing NNIs to the subsplit DAG.
This algorithm would enjoy linear run time, which the current implementation does not. 
However, this approach has a fundamental flaw: there are unobtainable subsplits and edges.
One can begin with a single topology on a set of taxa, iteratively build an sDAG by adding the five edges from the best known tree for an NNI, continue until all NNIs are exhausted, and obtain an sDAG missing valid subsplits and edges.
This is in strict contrast to NNIs on topologies, where every topology can be reached from any other topology by a sequence of NNIs.

Overall, our goal in this work is to do systematic inference to find high posterior density regions of tree space.
We introduce the first structures and operations on those structures to make this possible without considering each tree individually.
In order to do systematic inference, one needs a means of evaluating these structures, and here we introduce top pruning and generalized pruning.
Although top pruning shows reasonable performance, further work will be needed to improve performance over aggregating trees from an MCMC sampler into an sDAG\@.

\section*{Acknowledgements}

This work was supported through US National Institutes of Health grant AI162611.
Scientific Computing Infrastructure at Fred Hutch was funded by ORIP grant S10OD028685.
Dr.\ Matsen is an Investigator of the Howard Hughes Medical Institute.

\bibliographystyle{abbrv}
\bibliography{main}
\clearpage

\section*{Supplementary Materials}
\beginsupplement%

\subsection*{Additional Benchmarking Figures}

\vspace{-3ex}
\begin{figure}[!h]
\includegraphics[width=0.90\textwidth]{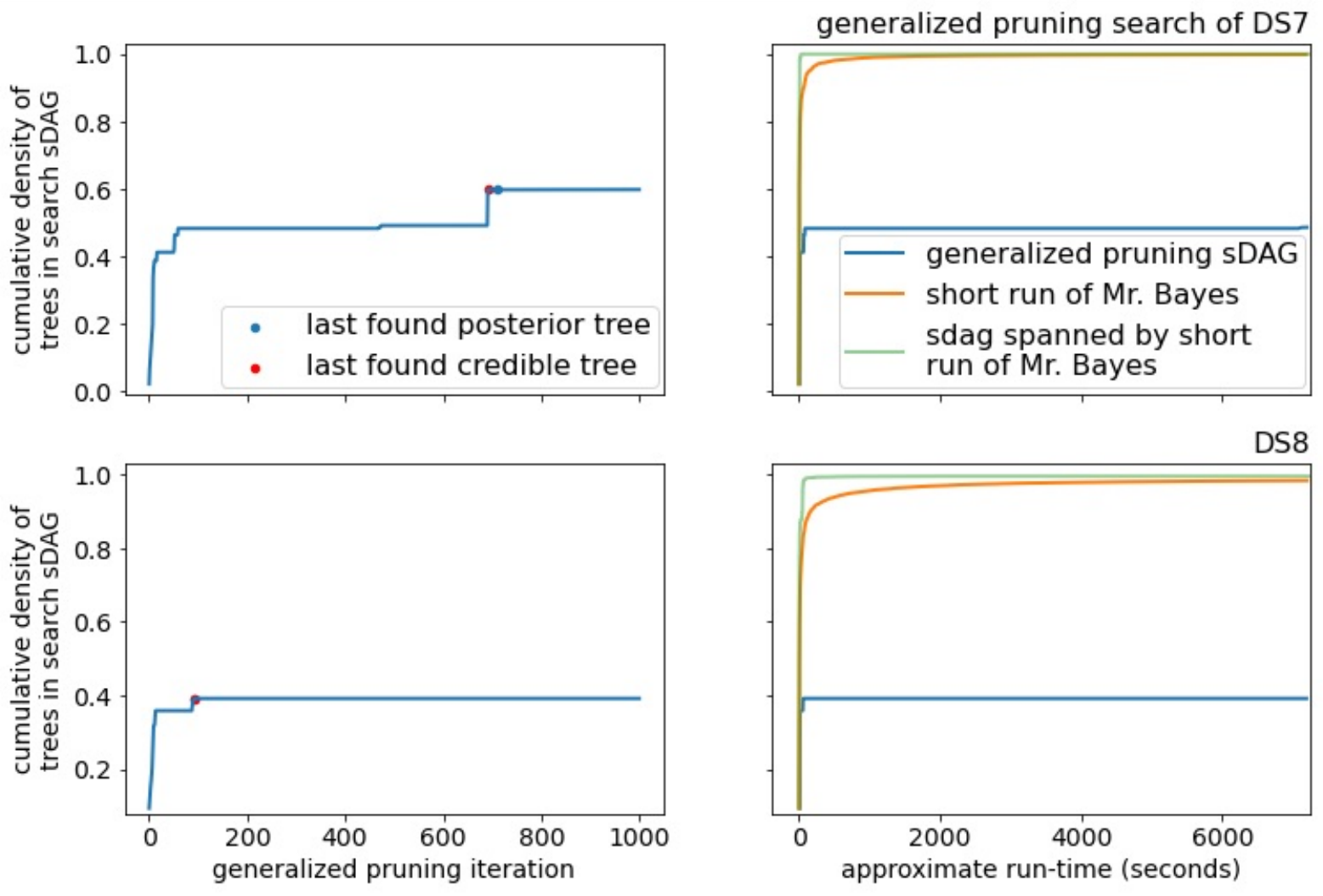}
\caption{The empirical posterior density found by generalized pruning on the remaining DS-datasets and a comparison with MCMC.
}
\label{fig:generalizedPruningFoundPosterior2}
\end{figure}

\vspace{-4ex}
\begin{figure}[!h]
\includegraphics[width=0.90\textwidth]{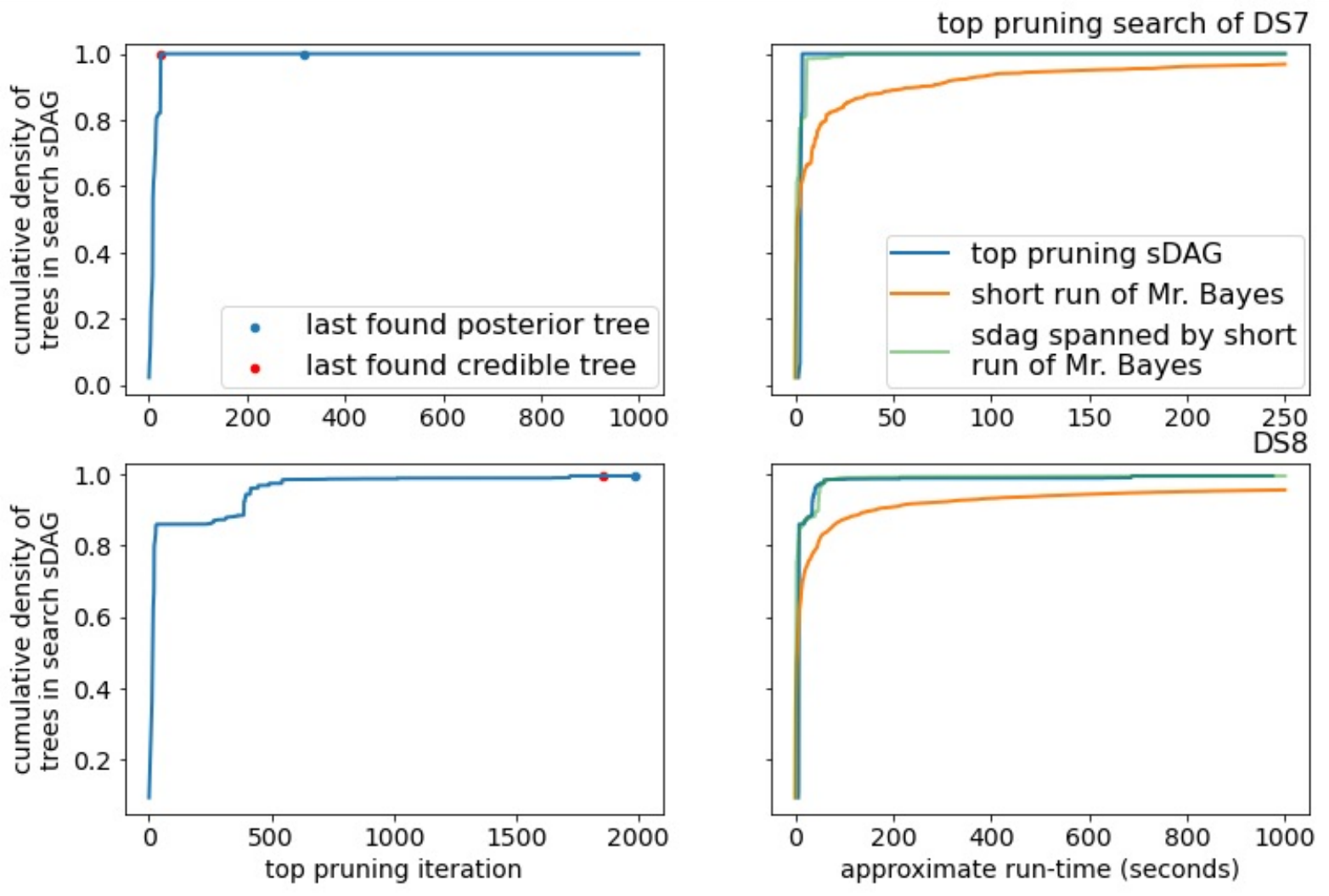}
\caption{The empirical posterior density found by top pruning on the remaining DS-datasets and a comparison with MCMC.
Note the $x$-axis scale varies between data sets.
}
\label{fig:topPruningFoundPosterior2}
\end{figure}

\begin{figure}[!t]\centering
\includegraphics[scale=0.4]{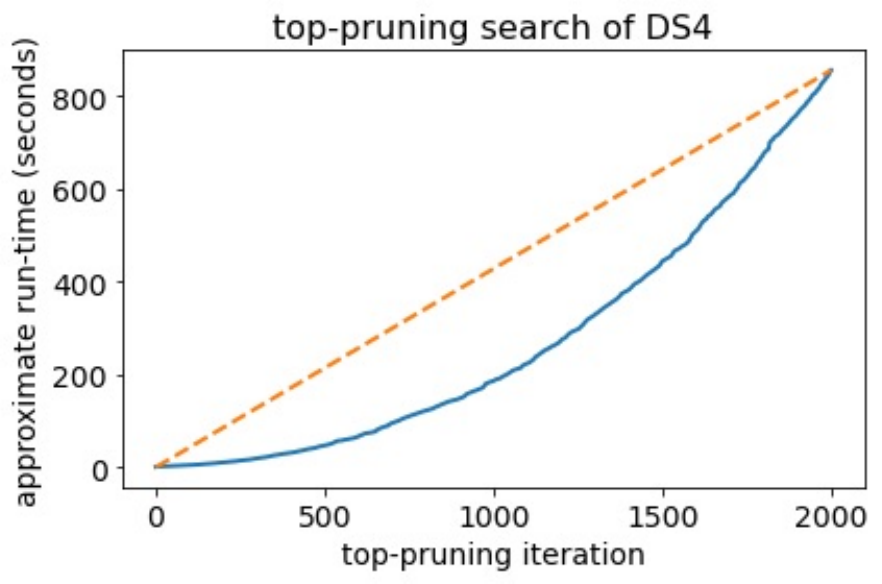}
\caption{The run-time (blue) of top pruning on DS4. 
The straight dotted orange line visually confirms convexity. 
All other data sets exhibit similar run-time behavior.}
\label{fig:topPruningRunTimeDS4}
\end{figure}

\subsection*{Compatible Subsplits and Edges}
At the end of the subsection on Performing NNIs to the subsplit DAG, we stated our preference for sDAGs with edges between all compatible subsplits and some properties of such sDAGs. 
Having described the top pruning algorithm, we now explain these preferences.
Consider the likelihood in \eqref{eq:fullTopLikelihood}, for which the top pruning likelihood serves as a proxy. 
For an sDAG without edges between all compatible subsplits, we would need to consider topologies past those that contain the central edge. 
It is possible for a new topology to exist in the post-NNI sDAG, not contain the central edge, yet have the maximum likelihood.
Even among topologies containing the central edge, the one of maximum likelihood may not be an NNI of a topology in the pre-NNI sDAG, making it impossible for a best known tree to obtain the maximum likelihood.
Maintaining an sDAG with all possible edges prevents these issues.

Next we provide a proof of those properties and an example of an sDAG with missing edges without those properties.
Recall by ``compatible subsplits'', we mean two subsplits $t$ and $s$ such that $s$ bipartitions one of the subsplit-clades of $t$, and an sDAG is missing an edge if it contains compatible subsplits $t$ and $s$ but not the edge $t\to s$.

\begin{prop*}
Suppose $\sdag$ is an sDAG with edges between all compatible subsplits, $\sdag^\prime$ is an sDAG given by applying an NNI to $\sdag$, and $t^\prime\to s^\prime$ in $\sdag^\prime$ is the central edge of the NNI.
If a topology $\tau^\prime$ is in $\sdag^{\prime}$ and not in $\sdag$, then the edge $t^\prime\to s^\prime$ is in $\tau^\prime$ and there exists a topology $\tau$ in $\sdag$ such that $\tau^{\prime}$ is an NNI of $\mathcal{\tau}$.
\end{prop*}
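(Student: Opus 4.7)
For both conclusions I would leverage the observation that every edge in $\sdag^\prime \setminus \sdag$ must touch one of the subsplits $t^\prime$ or $s^\prime$, since the NNI construction only introduces edges whose source or destination is one of these. Concretely, for the first conclusion I would argue the contrapositive: assume $\tau^\prime \in \sdag^\prime$ does not use the central edge $t^\prime \to s^\prime$, and show $\tau^\prime \in \sdag$. First I would show every node of $\tau^\prime$ lies in $\sdag$; the only nodes at issue are $t^\prime$ and $s^\prime$. If $\tau^\prime$ contains $t^\prime$, then $\tau^\prime$ must include some outgoing edge resolving the $X \cup Z$ clade of $t^\prime$, but among NNI-introduced edges with source $t^\prime$ only $t^\prime \to s^\prime$ resolves that clade (the other added source-$t^\prime$ edges resolve $Y$). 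Since $\tau^\prime$ avoids $t^\prime \to s^\prime$ by assumption, it must use a pre-existing edge $t^\prime \to v$ with $v \neq s^\prime$, which forces $t^\prime \in \sdag$. A symmetric argument on incoming edges---the only NNI-introduced edge with destination $s^\prime$ is the central edge---yields $s^\prime \in \sdag$ whenever $\tau^\prime$ contains $s^\prime$. With every node of $\tau^\prime$ in $\sdag$, each edge of $\tau^\prime$ joins two compatible subsplits of $\sdag$, and the all-compatible-edges hypothesis places it in $\sdag$.

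For the second conclusion I would construct $\tau$ explicitly by reversing the NNI on $\tau^\prime$ around the central edge. By the first part $\tau^\prime$ contains $t^\prime \to s^\prime$, and its local structure there consists of up to five edges: a parent edge $u \to t^\prime$, the central edge, and the clade-resolving children $\sscl{t^\prime}{Y} \to v_Y$, $\sscl{s^\prime}{X} \to v_X$, $\sscl{s^\prime}{Z} \to v_Z$ (omitting any whose clade is a singleton). Let $\tau$ be obtained from $\tau^\prime$ by replacing this neighborhood with $u \to t$, $t \to s$, $\sscl{t}{Z} \to v_Z$, $\sscl{s}{X} \to v_X$, $\sscl{s}{Y} \to v_Y$. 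Each replaced edge lies in $\sdag$: the edge $t \to s$ is the pivot of the NNI and is in $\sdag$ by hypothesis, and for each remaining edge I would inspect whether its $\tau^\prime$-counterpart was created by construction rule (b) or (c), in which case the desired $\sdag$-edge is exactly what the rule's precondition asserts, or whether the counterpart already sat in $\sdag$, in which case the all-compatible-edges hypothesis produces the desired edge. The remaining edges of $\tau^\prime$ are unchanged in $\tau$ and avoid $t^\prime$ and $s^\prime$, so they lie in $\sdag$ by the node argument from part one. The resulting $\tau$ is an NNI neighbor of $\tau^\prime$ via the $Y$-$Z$ swap.

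The main obstacle is that $t^\prime$ or $s^\prime$ may already have been subsplits of $\sdag$ before the NNI, so some NNI-added edges are genuinely new while others are redundant with pre-existing edges, and the proof must handle both possibilities. The delicate step is the node argument in part one, where I must rule out that $\tau^\prime$ uses $t^\prime$ or $s^\prime$ via NNI-added edges other than the central edge while still avoiding the central edge; once that is set up correctly, the all-compatible-edges hypothesis handles the passage from nodes to edges, and the reverse-NNI construction in part two reduces to inspection of the five local edges.
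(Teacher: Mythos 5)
Your proof is correct and follows essentially the same route as the paper's: both parts rest on the observation that the only NNI-introduced edge out of $t^\prime$ resolving $\bigcup(s^\prime)$, and the only one into $s^\prime$, is the central edge, so that a topology avoiding it has all of its subsplits in $\sdag$ and hence (by the all-compatible-edges hypothesis) lies in $\sdag$; and both recover $\tau$ by locally reversing the NNI around $t^\prime\to s^\prime$ and checking the five replacement edges are in $\sdag$. The differences are only presentational: you argue the first part by contrapositive where the paper argues by contradiction, and you are slightly more explicit about whether each local edge is rule-created or pre-existing.
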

\begin{proof}
First we recall some facts about NNIs on sDAGs that follow directly from the definition.
There are at most two subsplits in $\sdag^\prime$ that are not in $\sdag$ (if such subsplits exist, they are among $t^\prime$ and $s^\prime$).
There is at most one edge that is in $\sdag^\prime$, is not in $\sdag$, and is of the form $t^\prime\to s^*$ with $\bigcup(s^*)=\bigcup(s^\prime)$ (if such an edge exists, it is $t^\prime\to s^\prime)$.
There is at most one edge that is in $\sdag^\prime$, is not in $\sdag$, and is of the form $t^*\to s^\prime$ (if such an edge exists, it is $t^\prime\to s^\prime)$.
Additionally, since $\sdag$ has all compatible edges, we can verify a generic edge $t^*\to s^*$ is in $\sdag$ by showing only that $t^*$ and $s^*$ are in $\sdag$.
Similarly, we can verify a generic topology $\tau^*$ is in $\sdag$ by showing only that each subsplit in $\tau^*$ is in $\sdag$.
Note $\tau^\prime$ is a topology and so it also has all compatible edges.

First we prove that $t^{\prime}\to s^{\prime}$ is in $\tau^{\prime}$ by showing $t^\prime$ and $s^\prime$ are in $\tau^\prime$. 
At least one of $t^\prime$ and $s^\prime$ must be in $\tau^\prime$, as otherwise all subsplits of $\tau^\prime$ are in $\sdag$ and so $\tau^\prime$ is in $\sdag$.

Suppose $t^\prime$ is in $\tau^\prime$. 
As $t^\prime$ is not a leaf, there is some edge $t^\prime\to s^{*}$ in $\tau^\prime$ with $\bigcup(s^*) = \bigcup(s^{\prime})$. 
We have a contradiction if $s^*\not=s^\prime$, as $s^*\not=s^\prime$ implies $t^\prime\to s^*$ is in $\sdag$, meaning all subsplits of $\tau^\prime$ are in $\sdag$ and so $\tau^\prime$ is in $\sdag$.
Thus $s^*=s^\prime$, so $s^\prime$ is in $\tau^\prime$.

Suppose $s^\prime$ is in $\tau^\prime$. 
As $s^\prime$ is not the root, there is some edge $t^{*}\to s^{\prime}$ in $\tau^\prime$.
We have a contradiction if $t^*\not=t^\prime$, as $t^*\not=t^\prime$ implies $t^*\to s^\prime$ is in $\sdag$, meaning all subsplits of $\tau^\prime$ are in $\sdag$.
Thus $t^*=t^\prime$, so $t^\prime$ is in $\tau^\prime$.

Thus both $t^\prime$ and $s^\prime$ are subsplits of $\tau^\prime$, and so $t^\prime\to s^\prime$ is in $\tau^\prime$.
To obtain $\tau$ from $\tau^\prime$, we apply the appropriate NNI to $\tau^\prime$ at $t^\prime\to s^\prime$.
Specifically, suppose the subsplits and edges in $\tau^\prime$ near $t^\prime\to s^\prime$ are $u\to t^\prime$, $t^\prime\to y$, $s^\prime\to x$, and $s^\prime\to z$.
Further suppose the NNI enlarging $\sdag$ to $\sdag^\prime$ swapped $\bigcup(y)$ with $\bigcup(z)$ at $t\to s$ (the case of $y$ with $x$ is identical).
Let $\tau$ by the topology given by $\tau^\prime$ after removing $t^\prime$ and $s^\prime$, removing $t^\prime\to s^\prime$ and the neighboring four edges, adding the subsplits $t$ and $s$, and adding the edges $u\to t$, $t\to s$, $t\to z$, $s\to x$, and $s\to y$.
All of these edges are between compatible subsplits of $\sdag$ and so $\tau$ is in $\sdag$.
Since $\tau$ is an NNI of $\tau^\prime$, $\tau^\prime$ is an NNI of $\tau$.
\end{proof}

To see how issues may arise when an sDAG is missing edges between compatible subsplits, consider the three topologies:
\begin{align*}
&(0,(((((1,(2,3)),(4,5)),(6,(7,8))),9),10)),\\
&(0,(((1,((2,6),((3,7),8))),(4,5)),(9,10))),\text{ and}\\
&(0,((((((((1,2),3)),((6,7),8)),4),5),10),9)).
\end{align*}
Let $\sdag$ be the sDAG generated by these topologies.
A series of lengthy, but trivial calculations, shows that 
\begin{itemize}
\item $\sdag$ contains only the three input topologies;
\item $\sdag$ is missing edges between compatible subsplits;
\item both the topology $(0,(((((1,2),3),((6,7),8)),(4,5)),(9,10)))$ and the topology $(0,((((1,((2,6),((3,7),8))),(4,5)),9),10))$ are in $\sdag^\prime$, the sDAG obtained by enlarging $\sdag$ with the NNI swapping $\{4,5\}$ and $\{6,7,8\}$ at the edge 	$\pcsp{\subsplit{1,2,3,4,5}{6,7,8}}{\subsplit{1,2,3}{4,5}}$;
\item the topology $(0,(((((1,2),3),((6,7),8)),(4,5)),(9,10)))$ is not an NNI of any topology in $\sdag$;
\item and the topology $(0,((((1,((2,6),((3,7),8))),(4,5)),9),10))$ does not contain the central edge $\pcsp{\subsplit{1,2,3,6,7,8}{4,5}}{\subsplit{1,2,3}{6,7,8}}$.
\end{itemize}

\subsection*{Initialization of Choice Maps}

Here we formalize the concept of edge choice maps and best known tree used in the top pruning algorithm.
Suppose we have a list of trees, the phylogenetic likelihoods of these trees, and the sDAG constructed from the trees. 
We define a rootward choice map as a map from each edge of the sDAG to the sibling and parent edges taken from the maximum likelihood input tree containing the edge.
Similarly, we define a leafward choice map as a map from each edge of the sDAG to the two child edges taken from the maximum likelihood input tree containing the edge.
Note these choices are made at the level of sDAG edges, not subsplits.
We also store branch lengths for every sDAG edge, where an edge of the sDAG takes its branch lengths from the maximum likelihood input tree that contains the edge.

Given these maps we apply them recursively given a starting edge, filling out a topology.
Since the edges also have assigned branch lengths, we have a tree that is ready for likelihood evaluation.
For a given edge, the tree constructed from the choice maps and branch lengths is what we take as the \emph{best known tree} containing the edge.
With multiple input trees, the best known tree for an edge produced by the choice maps need not be one of the input trees, as seen in Figure \ref{fig:choiceMapInitialization}.

We use the so-called best known tree for an edge instead of the maximum likelihood tree containing the edge because the latter is not computationally feasible.
Consider an edge from a parent subsplit $t$ to a child subsplit $s$. 
Intuitively, we can construct all topologies in the sDAG that contain the edge $t\rightarrow s$ by choosing neighboring edges and moving outward until we have constructed a full topology.
We would explore all combinations of an edge ending in the parent $t$, an edge leaving the parent $t$ to a subsplit bipartitioning the clade $\bigcup(t)-\bigcup(s)$, an edge leaving the child $s$ to a subsplit bipartitioning one subsplit-clade of $s$, and an edge leaving the child $s$ to a subsplit bipartitioning the other subsplit-clade.
We then continue outward exploring all options of neighbors for the new edges. 
This is depicted in Figure~\ref{fig:sdagNeighboringEdges}.
Furthermore, we would need to calculate the likelihoods of all of these trees.
Our best known trees greatly restrict how often we choose neighboring edges. 
In particular, for each edge of the sDAG we make the choice of four neighbors only once. 

\begin{figure}[!h]
\centering
\includegraphics[width=0.9\textwidth]{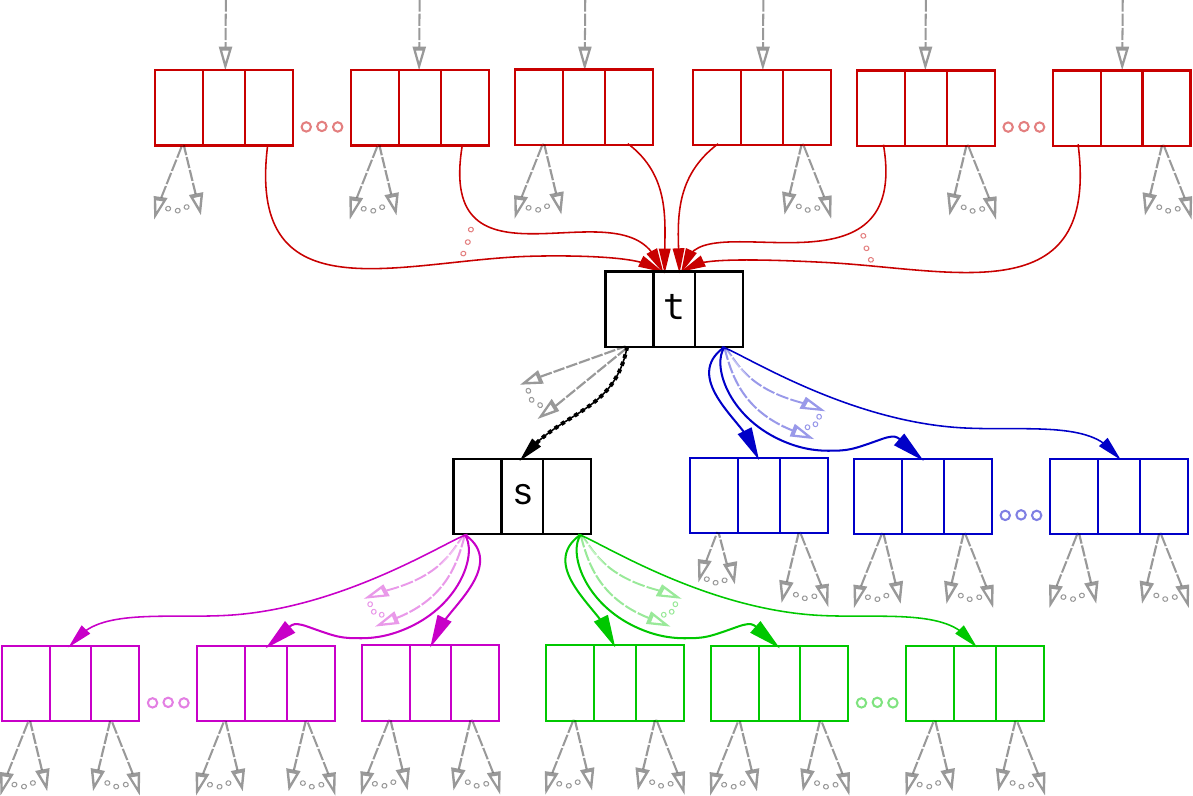}
\caption{The sDAG structure near the edge $t\rightarrow s$, which is selected to start.
The parent subsplits of $t$ in the sDAG are in red; the child subsplits of $t$ opposite $s$ are in blue; the child subsplits of $s$ bipartitioning one subsplit clade of $s$ are in green; and the child subsplits of $s$ bipartitioning the other subsplit clade of $s$ are in purple.
Constructing all topologies with the edge $t\rightarrow s$ begins by taking all combinations of selecting one of each of the 
red, blue, green, and purple subsplits and edges.}
\label{fig:sdagNeighboringEdges}
\end{figure}

At this step, we have defined choice maps and best known trees for edges in an sDAG generated from a specific list of trees.
In the next subsection, we explain how we maintain these choice maps as we grow an sDAG with NNIs.

\subsection*{Maintaining Choice Maps}

Suppose we are in the situation where we have an sDAG $\mathcal{D}$ with fully defined choice maps and branch lengths.
For an NNI on $\mathcal{D}$ producing $\mathcal{D}^\prime$, we must define the choice maps on the edges of $\mathcal{D}^\prime$. 
For edges common to $\mathcal{D}$ and $\mathcal{D}^\prime$, we use the choice maps of $\mathcal{D}$.
The remaining edges are of the five types discussed in the subsection Performing NNIs to the subsplit DAG.
Suppose the sDAGs and edges are as depicted in Figure \ref{fig:topPruningChoices}. 
Let $u$, $x$, $y$, and $z$ denote the subsplits of $\mathcal{D}$ where $u\to t$, $s\to x$, $s\to y$, and $t\to z$ are the parent, left child, right child, and sibling edges given by the choice maps at $t\to s$.
Let $u^*$, $x^*$, $y^*$, and $z^*$ denote arbitrary subsplits with $u^*\to t$, $s\to x^*$, $s\to y^*$, and $t\to z^*\in\mathcal{D}$.

We define choice maps for edges with these subsplits, $t^\prime$, and $s^\prime$
(where $t^\prime \to s^\prime$ is the central edge of the NNI)
as follows:
\begin{align}\label{eq:topPruningChoices}
&\text{parent} (u^*\rightarrow t^\prime)  = \text{parent}(u^* \rightarrow t),&
&\text{sibling} (u^*\rightarrow t^\prime) = \text{sibling}(u^* \rightarrow t),\\\nonumber
&\text{child}_1 (u^*\rightarrow t^\prime) = t^\prime \rightarrow s^\prime,&
&\text{child}_2 (u^*\rightarrow t^\prime) = t^\prime \rightarrow y,\\\nonumber
&\text{branch\_length} (u^*\rightarrow t^\prime) = \text{branch\_length}(u^* \rightarrow t),\hspace{-10em}&&\\\nonumber
&\text{parent} (t^\prime\rightarrow s^\prime) = u \rightarrow t^\prime,& 
&\text{sibling} (t^\prime\rightarrow s^\prime) = t^\prime \rightarrow y,\\\nonumber 
&\text{child}_1 (t^\prime\rightarrow s^\prime) = s^\prime \rightarrow x,& 
&\text{child}_2 (t^\prime\rightarrow s^\prime) = s^\prime \rightarrow z,\\\nonumber 
&\text{branch\_length} (t^\prime\rightarrow s^\prime) = \text{branch\_length}(t \rightarrow s),\hspace{-10em}&&\\\nonumber
&\text{parent} (t^\prime\rightarrow y^*) = u \rightarrow t^\prime,&
&\text{sibling} (t^\prime\rightarrow y^*) = t^\prime \rightarrow s^\prime,\\\nonumber
&\text{child}_1 (t^\prime\rightarrow y^*) = \text{child}_1(s \rightarrow y^*),&
&\text{child}_2 (t^\prime\rightarrow y^*) = \text{child}_2(s \rightarrow y^*),\\\nonumber
&\text{branch\_length} (t^\prime\rightarrow y^*) = \text{branch\_length}(s \rightarrow y^*),\hspace{-10em}&&\\\nonumber
&\text{parent} (s^\prime\rightarrow x^*) = t^\prime \rightarrow s^\prime,&
&\text{sibling} (s^\prime\rightarrow x^*) = s^\prime \rightarrow z,\\\nonumber
&\text{child}_1 (s^\prime\rightarrow x^*) = \text{child}_1(s \rightarrow x^*),&
&\text{child}_2 (s^\prime\rightarrow x^*) = \text{child}_2(s \rightarrow x^*),\\\nonumber
&\text{branch\_length} (s^\prime\rightarrow x^*) = \text{branch\_length}(s \rightarrow x^*),\hspace{-10em}&&\\\nonumber
&\text{parent} (s^\prime\rightarrow z^*) = t^\prime \rightarrow s^\prime,&
&\text{sibling} (s^\prime\rightarrow z^*) = s^\prime \rightarrow x,\\\nonumber 
&\text{child}_1 (s^\prime\rightarrow z^*) = \text{child}_1(t \rightarrow z^*),&
&\text{child}_2 (s^\prime\rightarrow z^*) = \text{child}_2(t \rightarrow z^*),\\\nonumber
&\text{branch\_length} (s^\prime\rightarrow z^*) = \text{branch\_length}(t\rightarrow z^*).\hspace{-10em}&&
\end{align}
We emphasize that for edges common to $\mathcal{D}$ and $\mathcal{D}^\prime$, we use the values from $\mathcal{D}$, not those in equation \eqref{eq:topPruningChoices}.

In summary, given a pre-NNI sDAG with defined choice maps, we extend the choice maps to the post-NNI sDAG.
For the branch lengths, we use those in equation \eqref{eq:topPruningChoices} as starting values for optimization. 
We optimize these branch lengths to maximize the standard phylogentic likelihood of the best known tree associated to each edge. 
For edges present in $\mathcal{D}$, the branch lengths are unaltered.
The edge $t^\prime\rightarrow s^\prime$ of $\mathcal{D}^\prime$ has a well-defined best known tree.
We define the top pruning likelihood of the NNI to be the likelihood of this tree. 

\begin{figure}[!th]\centering
\includegraphics[scale=0.5]{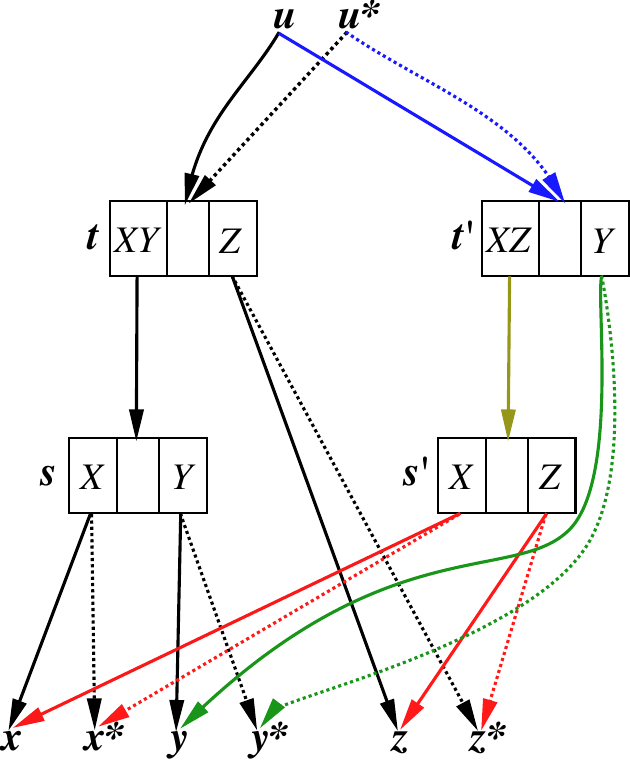}
\caption{
Subsplits and edges for the NNI swapping clades $Y$ and $Z$ at the edge $t\to s$.
The potentially new subsplits are $t^\prime$ and $s^\prime$.
Edges in black are existing edges of the sDAG, with solid lines indicating edges selected by the choice maps at $t\to s$ and dotted lines are additional edges.
Specifically, the existing choice maps take $\text{parent}(t\rightarrow s) = u\rightarrow t$, $\text{sibling}(t\rightarrow s) = t\rightarrow z$, $\text{child}_1(t\rightarrow s) = s\rightarrow x$, and $\text{child}_2(t\rightarrow s) = s\rightarrow y$.
Edges in color are potentially new edges, with solid lines being edges selected by choice maps at new edges and dotted lines
are additional edges.
}
\label{fig:topPruningChoices}
\end{figure}

To maintain an sDAG with edges between all compatible subsplits, we require choice maps and branch lengths at two additional types of edges.
These edges take the either the form $t^\prime\to s^*$, with $\bigcup(s^*)=\bigcup(s^\prime)$, or $t^*\to s^\prime$.
Denote the left subsplit-clade of $s^*$ by $X^*$, the right subsplit-clade of $s^*$ by $Z^*$, and the subsplit-clade opposite $s^\prime$ of $t^*$ by $Y^*$.
The choice maps for such edges are given by,
\begin{align}\label{eq:tpExtraEdgeChoices}
\text{parent}(t^\prime\to s^*) &= \text{parent}(t^\prime\to s^\prime),\nonumber\\
\text{sibling}(t^\prime\to s^*) &= \text{sibling}(t^\prime\to s^\prime),\nonumber\\
\text{child}_1(t^\prime\to s^*) &= \argmax_{\substack{\, s^*\to x^*\in\sdag,\\\,\bigcup(x^*)=X^*}}\, p_\psi(\bY\mid \mathcal{B}(s^*\to x^*)),\nonumber\\
\text{child}_2(t^\prime\to s^*) &= \argmax_{\substack{\, s^*\to z^*\in\sdag,\\\, \bigcup(z^*)=Z^*}}\, p_\psi(\bY\mid \mathcal{B}(s^*\to z^*)),\nonumber\\
\text{parent}(t^*\to s^\prime) &= \argmax_{u^*\to t^*\in\sdag}\, p_\psi(\bY\mid \mathcal{B}(u^*\to t^*)),\nonumber\\
\text{sibling}(t^*\to s^\prime) &= \argmax_{\substack{\,t^*\to y^*\in\sdag,\\\, \bigcup(y^*)=Y^*}}\, p_\psi(\bY\mid \mathcal{B}(t^*\to y^*)),\nonumber\\
\text{child}_1(t^*\to s^\prime) &= \text{child}_1(t^\prime\to s^\prime),\nonumber\\
\text{child}_2(t^*\to s^\prime) &= \text{child}_2(t^\prime\to s^\prime),
\end{align}
where $\mathcal{B}(e)$ denotes the best known tree for the edge $e$.
With the choice maps defined, we then assign branch lengths to $t^\prime\to s^*$ and $t^*\to s^\prime$ to maximize the phylogenetic likelihood of the best known trees for these edges.
This allows us to extend choice maps and branch lengths of $\sdag$ to those for an sDAG obtained by an NNI and adding all compatible edges.

\subsection*{How the sDAG Captures the Posterior}

Thinking in a more general sense, we can ask the question, ``does the sDAG help us find additional trees in the topological posterior distribution?'' 
With top pruning, the answer is yes, but what about building the sDAG from a collection of trees sampled from the posterior?
(Note in this section, in contrast to the main body of the paper, we do not add all compatible edges.)
Necessarily the posterior density of topologies in an sDAG is at least that of the topologies used to construct the sDAG. 
We also know that the number of topologies in an sDAG grows very fast with the number of input topologies.
But how much is the additional posterior density and how many additional credible topologies are in the sDAG? 

Starting from the beginning of an MCMC run on the data set, let $S$ be the set of topologies sampled after $k$ MCMC generations, $\mathcal{D}$ be the sDAG built from $S$, and $T$ be the set of topologies in $\mathcal{D}$.
How large are: the posterior density of trees in $S$, the posterior density of trees in $T$, and $S \cap C$ relative to $T \cap C$, where $C$ is the $95\%$ credible set?
Rather than plotting these values against $k$, the number of generations, we use the number of distinct topologies found after $k$ generations.

\begin{figure}[!t]\centering
\includegraphics[scale=0.35]{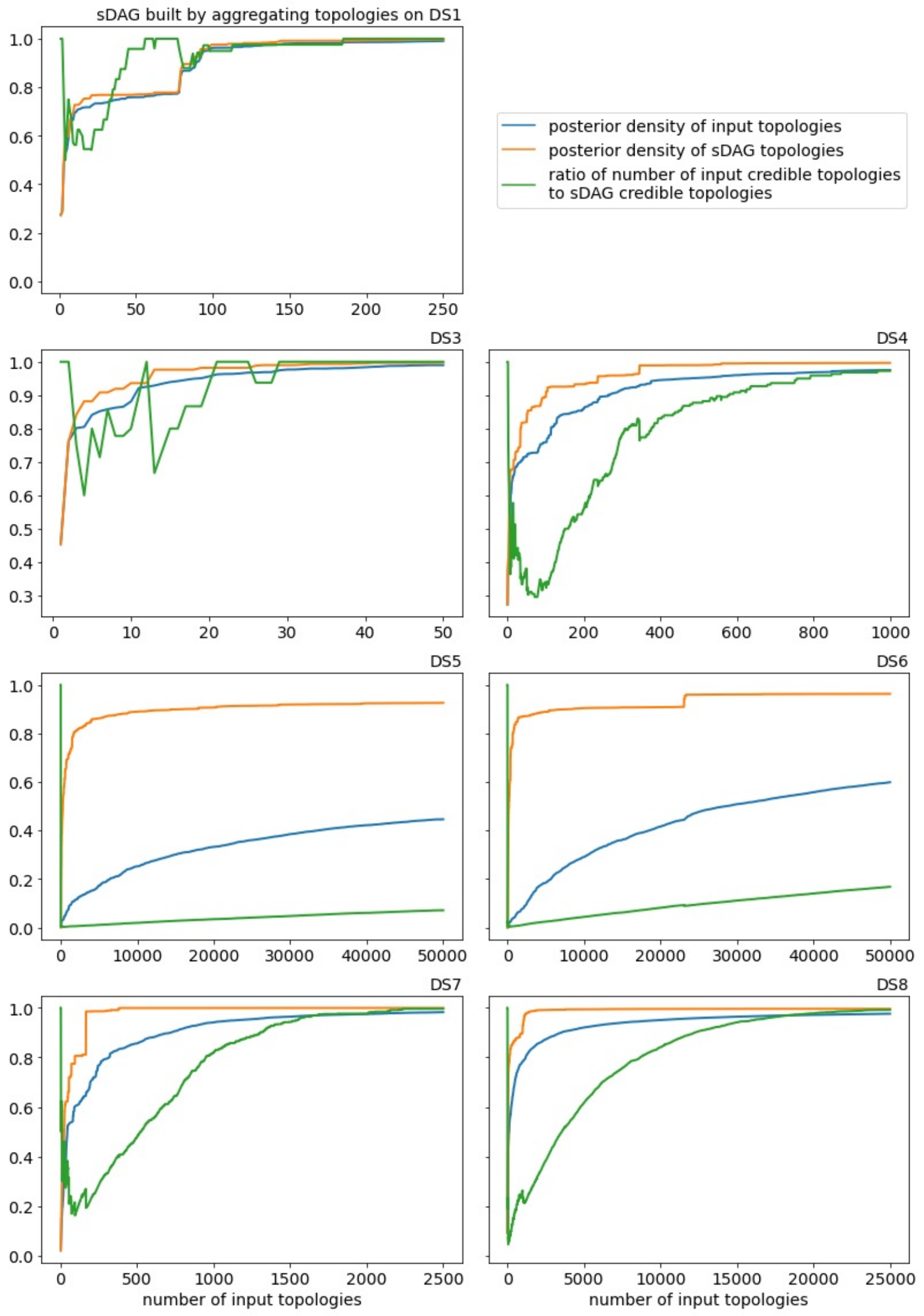}
\vspace{-1ex}
\caption{Topologies and spanned sDAGs from MCMC exploration of the DS-datasets. 
Input topologies are from the short runs of \mrbayes\ described earlier.
}
\label{fig:mcmc_to_sdag_stats}
\end{figure}

Let us examine these plots (Figure~\ref{fig:mcmc_to_sdag_stats}) in terms of posterior density, $S$, $T$, and $C$.
For posterior density, the gain in using an sDAG built from the input trees is seen by comparing the blue lines to the orange lines.
These gains vary by data set, with the more diffuse posteriors yielding higher gains.

The green lines, which relate to the credible set $C$, are harder to interpret. All plots begin in the top-left corner, because the short MCMC runs start at the maximum posterior tree (i.e., the first topology is credible). The behavior past that is again dependent on how diffuse is the posterior distribution. For DS1 and DS3 (the least diffuse), the sDAG initially provides additional credible topologies, but these topologies are quickly found by the short MCMC run. For DS5 and DS6 (the most diffuse), the sDAG provides a large number of credible topologies not found by the short MCMC run. 
On the remaining data sets, which are DS4, DS7, and DS8, the extent to which the sDAG contains additional credible topologies follows the ranking of how diffuse these data sets are.

Overall, the pattern is that for a diffuse data set, taking reasonable topologies and forming an sDAG could be very advantageous. However, if the posterior is rather compact, then the sDAG gives some additional information but at costly price in terms of the number of topologies.

\end{document}